\newtheorem{theorem}{Theorem}
\newtheorem{corollary}{Corollary}
\newtheorem{definition}{Definition}
\newenvironment{proof}[1][Proof]{\begin{trivlist}
\item[\hskip \labelsep {\bfseries #1}]}{\end{trivlist}}
\newcommand{\qed}{\nobreak \ifvmode \relax \else
      \ifdim\lastskip<1.5em \hskip-\lastskip
      \hskip1.5em plus0em minus0.5em \fi \nobreak
      \vrule height0.30em width0.4em depth0.25em\fi}
\newcommand*{\sign}{\mathop{\mathrm{sign}}\nolimits}
\author{\thanks{This project has received funding from the European Union's Horizon 2020 research and innovation programme under the Marie Sk\l{}odowska-Curie grant agreement No 731143}\,\ Willem L. Fouch\'e \thanks{corresponding author e-mail:fouchwl@gmail.com}  \,\ and  Safari Mukeru \thanks{e-mail: mukers@unisa.ac.za}\\
\footnotesize{\em Department of Decision Sciences, School of Economic  and Financial Sciences}\\ 
\footnotesize{University of South Africa, P. O. Box 392, Pretoria, 0003. South Africa}}
\title{On local times of Martin-L\"of random Brownian motion }
\date{}
\begin{document}

\maketitle

\pagenumbering{arabic}

\abstract{
In this paper we study the local times of Brownian motion  from the point of view of algorithmic randomness. We introduce the notion of effective local time and show that {\it any} path which is Martin-L\"of random with respect to the Wiener measure  has continuous effective local times at every computable point. 
Finally we obtain a new simple  representation of classical Brownian local times, computationally expressed.}

{\bf Keywords:} Brownian motion, Kolmogorov complexity, Martin-L\"of randomness, local times, effective  stochastic integrals\\ 
{\bf 2010 Mathematics Subject Classification: 03D32, 60H05, 60J65.} 

\section{Introduction}

A Brownian motion on the unit interval is a real-valued function $X: (t,\omega) \mapsto X(t,\omega)$ defined on $[0,1]\times \Omega$, where $\Omega$ is the underlying space of some probability space, such that $X (0,\omega)=0$, the function $t \mapsto X(t,\omega)$ is continuous for any $\omega$, and for any finite sequence $0< t_1 < \ldots < t_n \leq 1$, the functions $\omega \mapsto X(t_1,\omega),X(t_2,\omega)-X(t_1,\omega), \cdots, X(t_n,\omega)-X(t_{n-1},\omega)$ are random variables  statistically independent and normally distributed with means $0$ and variances $t_1,t_2-t_1,\cdots,t_n-t_{n-1}$, respectively.
In this paper, we shall consider the canonical Brownian motion $X$ defined by $X(t,\omega) = \omega(t)$ where $\Omega = C[0, 1]$ is the set of continuous real functions defined on $[0, 1]$ and vanishing at the origin, with its uniform norm topology and endowed with the {\it Wiener measure} $\mathbb{P}$. The random variable $\omega \mapsto X(t,\omega)$ will be denoted $X(t)$. The Wiener measure $\mathbb{P}$ is the unique Borel probability measure on $C[0, 1]$ such that for any $0 < t_1 < \ldots < t_n \leq 1$ and any Borel subset $A$ of $\mathbb{R}^n$, 
   \begin{eqnarray}\label{ewqres1}
\mathbb{P}\left[\omega\in C[0, 1]: (\omega(t_1), \ldots, \omega(t_n)) \in A \right]= \int_A \frac{e^{-\frac{1}{2} x^T Q^{-1} x}}{(2\pi)^{n/2} (\det Q)^{1/2}} dx
   \end{eqnarray}
   where $Q = (q_{ij})$ is the $n\times n$ matrix defined by $q_{ij} = t_i$, for $i\leq j$.  \\ If $Y$ is another Brownian motion on a probability space $(\Omega\,', \mathcal{G}, \mathbb{P}\,')$, then for any Borel subset $A$ of ~$C[0, 1]$
      \begin{eqnarray} \label{123wedswa12}
      \mathbb{P}\,'(\{\omega: Y(\omega) \in A\}) = \mathbb{P}(A)
      \end{eqnarray}
where $Y(\omega)$ is the function $[0, 1] \to \mathbb{R}$, $t \to Y(t,\omega)$.     
Discussions on the construction of Brownian motion can be found in \cite{ito_mckean, Morters_Peres}. 

The {\it occupation measure of the Brownian motion $X$ up to time $t$} is the random Borel measure defined by
$$\mu(t,\omega, A) = \lambda\{s\in [0, t]: X(s,\omega) \in A\},\, \, A \mbox{ Borel in }\mathbb{R},\,\,\omega\in \Omega.$$
Here $\lambda$ is the Lebesgue measure. 
 
 L\'evy \cite{Levy_48} proved that for almost all $\omega \in \Omega$, the occupation measure 
$\mu(t, \omega,.)$ is absolutely continuous, that is, there exists a function $$\mathscr{L}(t,\omega,.):\mathbb{R} \to \mathbb{R},\,\,x \mapsto \mathscr{L}(t, \omega, x)$$ such that
$$\mu(t, \omega, A) = \int_A \mathscr{L}(t, \omega, x) dx,\,\, (A \mbox{ Borel  in } \mathbb{R}).$$ The number $\mathscr{L}(t, \omega,x)$ is called the ``local time of $\omega$ at $x$ up to time $t$''. L\'evy called it the ``mesure du voisinage'' and it represents the ``time that the Brownian path $\omega$ spends at the point $x$ during the time interval $[0, t]$''. It is clear, by the Lebesgue density theorem that, for almost every $x\in \mathbb{R}$ (with respect to the Lebesgue measure),
\begin{eqnarray} \label{sdsd343ef}
\mathscr{L}(t, \omega, x) = \lim_{\epsilon \to 0^+} \frac{1}{2\epsilon} \lambda\{s \leq t: |X(s, \omega)-x|\leq \epsilon\}.\end{eqnarray}
Trotter \cite{Trotter}  proved later that  the occupation measure $\mu(t, \omega,.)$ has continuous density for almost all $\omega$, that is, $\mathscr{L}(., \omega,.): [0, 1] \times \mathbb{R}, (t,x) \to \mathscr{L}(t, \omega, x)$ is continuous. 
(See also \cite[Theorem 6.19]{Morters_Peres}.)
This has the implication that, for every $x \in \mathbb{R}$, almost surely, for all $t \in [0, 1]$,
   \begin{eqnarray} 
\mathscr{L}(t, \omega, x) = \lim_{\epsilon \to 0^+} \frac{1}{2\epsilon} \lambda\{s \leq t: |X(s, \omega)-x|\leq \epsilon\}.\end{eqnarray}

 We shall consider the local time at 0  and it will be denoted $\mathscr{L}(t, \omega)$ instead of $\mathscr{L}(t, \omega, 0)$. That is, 
\begin{eqnarray} \label{sdsd343wed3}
\mathscr{L}(t, \omega) = \lim_{\epsilon \to 0^+} \frac{1}{2\epsilon} \lambda\{s \leq t: |X(s, \omega)|\leq \epsilon\}.
\end{eqnarray}

The notion of local time has been extended to many stochastic processes. 
It  is an important key concept in the study of fine properties of stochastic processes, for example fractal dimensions of level sets, regularity of sample paths.  We refer to Geman and Horowitz \cite{Geman_Horowitz}, Xiao \cite{Xiao} and references therein for further background.  
The interplay between roughness of a function and the continuity of its local times is summarized in Table 1 in \cite{Geman_Horowitz}. For example, in the case of Gaussian processes, the smoothness of local times implies an  "irregularity "of its sample paths (Berman \cite{Berman_1972}).



We can, of course, associate with any real-valued function on the unit interval,  an occupation density, and ask wheter it is absolutely continuous with respect to Lebesgue measure. In this case the Radon-Nykodym derivative will be its local time. 
As indicated by Geman and Horowitz \cite{Geman_Horowitz},  it is an interesting open problem to investigate the existence of local times of particular classical functions such as the Weierstrass nowhere differentiable function and to determine which functions representable as Fourier series  are such that the occupation measure is absolutely continuous  and to compute the local times in terms of the Fourier coefficients.

One  aim of this project, of which this paper is the first of two, is to show that, relative to the halting problem, we can compute a continuous function on the unit interval, that does have local time, in the sense that the occupation density of this function is absolutely  continuous with respect to Lebesgue measure. It will turn out we can find such a function which is the complex oscillation  $\Phi(\Omega)$ associated with a Chaitin real $\Omega$  with $\Phi$ as constructed in Fouch\'e \cite{fo:2}.


Using the notion of Kolmogorov-Chaitin complexity, Asarin and Prokovskii \cite{ap:1} defined a subset of $C[0, 1]$ of full Wiener measure  independently of any specific property of Brownian motion. 
They considered the set of functions on $[0, 1]$ that can be uniformly approximated by sequences of piece-linear functions encoded by finite binary strings of high Kolmogorov-Chaitin complexity  and observed that such a class turns out to be, in later language,  exactly the class of Martin-L\"of random elements in the computable probability space $(C[0, 1], \mathbb{P})$ . Such  functions  are now referred to as {\it Martin-L\"of random Brownian paths} (ML Brownian paths) or {\it complex oscillations}.  
 
 It has been  observed in the past two decades  that {\it each}  Martin-L\"of random Brownian path or complex oscillation has many important properties of classical Brownian motion. For example, each complex oscillation satisfies the law of iterated logarithms \cite{fo:4}, the modulus of continuity of classical Brownian motion \cite{fo:1},  the doubling property of Hausdorff and Fourier dimension \cite{fomuda} and It\^o's lemma \cite{Mukeru_JOC_2014}.

Complex oscillations satisfy various other computable  properties \cite{fo:1, fo:5, KHNe:1, KHNe:2}.
 

It should be noted that in  \cite{fomuda}, the authors together with George Davie proved that  for {\it each} complex oscillation $\omega$, the occupation measure $\mu(t,\omega,.)$ of $\omega$ up to time $t$ is such that its Fourier transform 
$$\hat \mu(t,\omega, u) = \int_{0}^t \exp(i\, u\,\omega(s)) ds,\,\,u \in \mathbb{R} $$
satisfies 
$$|\hat \mu(t,\omega,u)|^2 = O(|u|^{-2+\epsilon}),\,\,u\to \infty $$ for all $\epsilon >0$. 
This has the implication that $\hat \mu(t,\omega,.) \in L^2(\mathbb{R})$  and by a standard argument, $\mu(t,\omega,.)$ is absolutely continuous  and
its Radon-Nikodym derivative $\mathscr{L}(t,\omega,.)\in L^2(\mathbb{R})$.  Hence for {\it almost every}  $x \in \mathbb{R}$ (with respect to the Lebesgue measure), 
    \begin{eqnarray} \label{233wea12q}
\mathscr{L}(t,\omega,x) = \lim_{\epsilon \to 0^+} \frac{\mu(x-\epsilon, x+\epsilon)}{2 \epsilon} = \lim_{\epsilon \to 0^+} \frac{1}{2\epsilon} \lambda\{s \leq t: |\omega(s)-x|\leq \epsilon\}.
\end{eqnarray} 
Thus, {\it each} complex oscillation admits local times at {\it almost} every level $x\in \mathbb{R}$.   
But nothing is known about the existence of local times $\mathscr{L}(t, \omega, x)$ at a {\it particular given point}  $x\in \mathbb{R}$. 
  
We shall prove that for {\it any} complex oscillation $\omega$ and {\it any} computable real number $x$,  and $\epsilon_n = 2^{-n},\,\, n \in \mathbb{N}$, 
the limit
 $$L(t, \omega, x) = \lim_{n \to  +\infty} \frac{1}{2 \epsilon_n} \lambda\{s \leq t: |\omega(s)-x|\leq \epsilon_n\}$$ exists and the function 
 $$L(.,\omega, x): [0, 1] \to [0, +\infty),\,\, t\mapsto L(t,\omega,x) $$  is continuous. 

It will turn out that  $L$ is indeed the local time of $\omega$ as defined in real analysis.

In the sequel  of this paper, which will be written in collaboration with George Davie, we  shall obtain effective versions of some classical results on local times of Brownian motion.   We shall prove that for any complex oscillation $\omega$, the effective local time $L$ of $\omega$ at the origin, denoted $L(t, \omega)$, is 
the maximum function of another complex oscillation $\tilde \omega$ obtained as a pointwise stochastic integral with respect to $\omega$.  This solves an open question in \cite{fo:2}. The fact that  the association $\omega \rightarrow \tilde\omega $ is in fact "layerwise computable " will be seriously discussed.  
 
 It is shown in \cite{fo:2} that each complex oscillation is uniquely determined by a Kolmogorov-Chaitin complex infinite binary string (also a $KC$-string). 
We prove that one can uniformly approximate the effective local times of a complex oscillation using an approximation of the complex oscillation itself. Given a complex oscillation $\omega$ defined by a $KC$-string $\alpha$, one can use a sequence of finite substrings of $\alpha$ to uniformly approximate both $\omega$ and its effective local time at the origin. Finally we show the effective local time of a complex oscillation $\omega$ can be approximated by the sum of absolute values of $\omega$ at dyadic rationals where $\omega$ changes its sign. As far as we know, this is a new result even in classical probability theory.

 \section{Preliminaries}


The metric space  $\Omega:=C_0[0, 1]$ is the space  of continuous functions $x:[0, 1] \to \mathbb{R}$ such that $x(0) = 0$) topologized by the uniform norm $$\|x\| = \sup\{|x(t)|: 0 \leq t \leq 1\}.$$  Let $\mathscr{S}$ be the set of functions $x \in \Omega$ that are piecewise  linear with rational coordinates for the points of non-differentiability (or corner points), that is, if $t$ is a point of non-differentiability of the function $x$, then both $t$ and $x(t)$ are rational numbers. Clearly,  $\mathscr{S}$ is a countable dense subset of $\Omega$. We shall consider a computable  enumeration 
    $$ x_1, x_2, \ldots, x_n \ldots $$ of all the elements of $\mathscr{S}$
 and assume it fixed throughout. The distance mapping 
      $$d: \mathscr{S}  \times \mathscr{S} \to \mathbb{Q},\,\, (x_i, x_j) \to d(x_i, x_j) = \|x_i - x_j\|,\,\,\,\,i, j = 1, 2 \ldots $$ is obviously computable in the sense that there exists an algorithm that on input $(i, j)$ yields the number $d(x_i, x_j)$. 
      
In the language of computability analysis, this means that the pair $(\mathscr{S}, d)$ defines a computable metric space of the metric space $\Omega$. (See for instance Hoyrup and Rojas \cite{horo:1}.)

For each $x \in \mathscr{S}$ and $r \in \mathbb{Q}$, let $B(x, r)$ be the ball in $\Omega$ with centre $x$ and radius $r$:
 $$B(x, r) = \{y \in \Omega: \|x -y\| < r\}.$$ Such balls are called {\it ideal} balls. 
The class of all ideal balls is obviously a countable set and it is dense in $\Omega$. We shall fix an effective listing  of all the ideal balls: $$B_1, B_2, \ldots, B_n, \ldots$$

Any open subset of $\Omega$ is a countable union of ideal balls. A subset $U \subset \Omega$ is an {\it effectively } open set if there is an algorithm that yields a sequence of integers $i_1, i_2, \dots, i_n, \dots$ such that 
   $$U = B_{i_1} \cup B_{i_2} \cup \ldots$$
An effectively open set is in fact a $\Sigma_1^0$ set. 

A sequence $U_1, U_2, \ldots$ of open subsets of $\Omega$ is called a {\it uniformly effective} sequence if there is an algorithm that, on input $i \in \mathbb{N}$ yields a sequence $i_1, i_2, \ldots, i_n, \ldots$ such that
    $$U_i = B_{i_1} \cup B_{i_2} \cup \ldots$$
(where $B_1, B_2, \ldots$ is the fixed sequence of ideal balls).
 \paragraph{Examples}
 \begin{enumerate}
   \item[(1)] In general, for fixed rationals $\alpha_1, \ldots, \alpha_n$, $ t_1, t_2, \ldots, t_n$, $a$, the subset 
       $$U = \{x\in \Omega: \alpha_1 x(t_1) + \ldots+ \alpha_n x(t_n) < a\}$$ is  effectively open as it can be seen from the following algorithm that expresses $U$ as a union of ideal balls. 
  Start with $L = \emptyset$ and $i = 1$ and consider the ideal ball $B_i$ (with centre  $x_i$ and radius $r_i$). If $\alpha_1 x_i(t_1) + \ldots+ \alpha_n x_i(t_n) < a$ and $r_i < |a - \alpha_1 x_i(t_1) + \ldots+ \alpha_n x_i(t_n)|$, then update $L\leftarrow L \cup \{i\}$. Repeat for $i \leftarrow i+1$. Then clearly $U = \cup_{i \in L} B_i.$ 
     \item[(2)]   Let $$U_n = \{x \in \Omega: \max_{1 \leq k \leq n} |x(k/n)| <1\},\,\,n = 1, 2, \ldots$$ Then $(U_n)$ is a  uniformly effective sequence of open sets. 
     Indeed, consider the following algorithm:
 Consider a list $A$ of all the elements of $\mathbb{N} \times \mathbb{N}$, for instance,
 $A = \{(1, 1), (1, 2), (2, 1), (1, 3), (2, 2), (3, 1), \ldots\}$.  
Starting with $i = 1$, consider the $i$th element of A, say $(j, n)$ and the ideal ball $B_j$ (assume that its centre is $ x_j $ and its radius is $r_j$). If $\max_{1 \leq k \leq n} |x_j(k/n)| <1$ and $r_j < |1 - \max_{1 \leq k \leq n} |x_j(k/n)||$ then include $B_i$ on the list of ideal balls for $U_n$. Update $i \leftarrow i+1$. 
This algorithm yields a list of ideal balls contained in $U_n$ for all $n =1, 2, \ldots$. 
 \end{enumerate}
 
 A Martin-L\"of test on $\Omega$ with respect to the Wiener measure $\mathbb{P}$ (or simply a Martin-L\"of test) is a sequence of uniformly effective open sets $(U_n)$ of $\Omega$ such that $\mathbb{P}(U_n) \leq 2^{-n}$ for all $n$. A path $x\in \Omega$ passes the test $(U_n)$ if $x\notin \cap_n U_n$. A path $x \in \Omega$ is called a {\it Martin-L\"of random Brownian path} (MLB path) if it passes all Martin-L\"of tests. 


The set of finite words or strings over the alphabet $\{0, 1\}$ is denoted by $\{0, 1\}^*$. Given a finite word or string $a$ over the alphabet  $\{0, 1\}$, that is,  $a \in \{0, 1\}^*$,  $|a|$ denotes the length of $a$. For $\alpha = \alpha_0 \alpha_1\ldots$ an infinite word over the alphabet $\{0, 1\}$, we denote by $\bar\alpha(n)$ the word $\alpha_0 \alpha_1\ldots \alpha_{n-1}.$ The Cantor space $\{0, 1\}^\mathbb{N}$ is denoted by $\mathscr{N}$. For any word $s \in \{0, 1\}^*$, $[s]$ is the interval $\{\alpha \in \mathscr{N}: \bar \alpha(n) = s\}$. The Lebesgue measure on $\mathscr{N}$ is denoted by $\lambda$ and $\lambda([s]) = 2^{-|s|}$ where $|s|$ denotes the length of the word $s$.  
\begin{definition}
A function $\nu:\{0, 1\}^* \to [0, 1]$ is called a recursively  enumerable semi-measure if
$$\sum_{a\in \{0, 1\}^*} \nu(a) \leq 1$$ and 
for  $q\geq 0$, with $q$ rational,  and $a \in \{0, 1\}^*$, the relation $\nu(a) > q$ is recursively enumerable in $a$ and $q$ in the sense that there exists an algorithm that, on input $a, q$ yields ``yes'' if indeed $\nu(a)>q$.
\end{definition}

\begin{theorem}
There exists a function $K:\{0, 1\}^* \to \mathbb{N}$ such that the function $\{0, 1\}^* \to [0, 1]: s\mapsto2^{-K(s)}$ is a recursively enumerable semi-measure  and for any other recursively enumerable semi-measure $\nu$, there is a constant $D$ such that $$K(a) \leq -\log_2 \nu(a) + D, $$
for all $a\in \{0, 1\}^*$. 
\end{theorem}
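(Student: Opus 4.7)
The plan is to construct $K$ as the code length associated with a universal recursively enumerable semi-measure, following Levin's argument. First I would effectively enumerate all r.e.\ semi-measures, with the caveat that the set of indices for partial computable objects that happen to be semi-measures is not decidable. The standard fix is to start from an enumeration of all c.e.\ sets $W_1, W_2, \ldots$ of pairs $(a, q) \in \{0, 1\}^* \times \mathbb{Q}_{\geq 0}$ and to process each $W_i$ online: maintain running rational values $v_i^s(a)$ equal to the largest $q$ accepted so far for the string $a$, and accept a new pair $(a, q)$ coming out of $W_i$ at stage $s$ only if $q > v_i^s(a)$ and raising $v_i^s(a)$ to $q$ would keep $\sum_b v_i^s(b) \leq 1$. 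Set $\nu_i(a) = \lim_s v_i^s(a)$; this yields a r.e.\ semi-measure $\nu_i$, uniformly in $i$, and the truncation is inactive whenever $W_i$ already encodes a bona fide r.e.\ semi-measure. Hence every r.e.\ semi-measure appears as some $\nu_{i_0}$ in the list.

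Next I would form the Levin mixture
$$M(a) = \sum_{i=1}^{\infty} 2^{-i-1}\, \nu_i(a).$$
The weights sum to $1/2$, so $\sum_a M(a) \leq 1/2$ and $M$ is a semi-measure. Each $\nu_i$ is uniformly lower semi-computable, hence a standard dovetailing on the partial double sums exhibits $M$ itself as lower semi-computable, i.e.\ as a r.e.\ semi-measure. For any r.e.\ semi-measure $\nu$, choosing $i_0$ with $\nu = \nu_{i_0}$ yields $M(a) \geq 2^{-i_0 - 1} \nu(a)$ for every $a$.

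Finally, define
$$K(a) = \lceil -\log_2 M(a) \rceil.$$
To guarantee $K(a) \in \mathbb{N}$, I would ensure $M(a) > 0$ for every $a$ by including in the enumeration a semi-measure that is strictly positive on all strings (for instance, a suitably normalised $a \mapsto c \cdot 2^{-2|a|}$). Then $2^{-K(a)} \leq M(a)$ forces $\sum_a 2^{-K(a)} \leq 1$, and lower semi-computability of $M$ translates into the predicate $2^{-K(a)} > q$ being r.e.\ in $(a, q)$, so $a \mapsto 2^{-K(a)}$ is itself a r.e.\ semi-measure. Combining with the mixture bound gives
$$K(a) \leq -\log_2 M(a) + 1 \leq -\log_2 \nu(a) + (i_0 + 2),$$
so the constant $D = i_0 + 2$ does the job.

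The main obstacle is the bookkeeping in the first step: one must verify both that the online truncation is conservative on candidates that already sum to at most $1$, so that the list $(\nu_i)$ really exhausts the r.e.\ semi-measures, and that the resulting $\nu_i$ are uniformly r.e.\ in $i$. Everything afterwards is essentially elementary manipulation of semi-computable quantities.
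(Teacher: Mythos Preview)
The paper does not actually prove this theorem; it is stated as a standard result and the reader is referred to the literature. Your argument via Levin's mixture $M(a)=\sum_i 2^{-i-1}\nu_i(a)$ is the canonical route, and the enumeration-with-truncation of all r.e.\ semi-measures, the domination bound $M(a)\ge 2^{-i_0-1}\nu(a)$, and the positivity trick are all correct.

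There is one technical slip in the last step. With $K(a)=\lceil -\log_2 M(a)\rceil$ you assert that lower semi-computability of $M$ makes the predicate ``$2^{-K(a)}>q$'' recursively enumerable. Unwinding, for rational $q\in(0,1)$ this becomes $\lceil -\log_2 M(a)\rceil\le m$ for a computable $m=m(q)$, i.e.\ $M(a)\ge 2^{-m}$; but from rational approximations $M_s(a)\nearrow M(a)$ one can only confirm the \emph{strict} inequality $M(a)>2^{-m}$, so the condition as written is $\Pi_1$, not $\Sigma_1$, and fails precisely when $M(a)$ happens to equal a negative power of $2$ that is never attained by the $M_s$. The repair is painless: define instead $K(a)$ to be the least $n$ with $M(a)>2^{-n}$ (equivalently, set $K(a)=\lim_s\lceil -\log_2 M_s(a)\rceil$). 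Then $K(a)\le n$ is genuinely $\Sigma_1$, one still has $2^{-K(a)}<M(a)$ so $\sum_a 2^{-K(a)}\le 1$, and $K(a)\le -\log_2 M(a)+1$, which only shifts the final constant $D$ by~$1$.
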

We shall fix one such function $K$ and for $a \in \{0, 1\}$, we shall  $K(a)$ is called the Kolmogorov complexity of the word $a$. For any other such function $K_1$,  $K_1(a) = K(a) + O(1)$ for all $a \in \{0, 1\}^*$ (see \cite{fo:1} and references therein for more discussions). 
An infinite binary string $\alpha$ is Kolmogorov-Chaitin complex if there exists $d$ such that for all $n$, $K(\bar \alpha(n)) \geq n-d.$ We call $d$ an incompressibility coefficient of $\alpha$. We will denote the set of Kolmogorov-Chaitin complex infinite binary strings by $KC$ and refer to its elements as $KC$-strings. It is well-known that $KC$ has Lebesgue measure 1.  
 We recall that $KC$-strings are exactly the Martin-L\"of random elements of the Cantor space. 
More background on Kolmogorov complexity can be found in \cite{ch:2, Li_Vitanyi, marlo:1, nie:1}.

    
For $n \geq 1$, we write $\mathscr{C}_n$ for the class of continuous functions on the unit interval that vanish at $0$ and are linear with slopes $\pm\ n^{-1/2}$ on the intervals $I_j = [(j-1)n^{-1},j n^{-1}]$, $j = 1,2, \ldots, n$. For every $\omega\in \mathscr{C}_n$, one can uniquely associate a binary string $a =a_1\cdots a_{n}$ by setting $a_j=1$ or $a_j =0$ according to whether $\omega$ increases or decreases on the interval $I_j$. The  string $a$ is called the code of $\omega$. Conversely any binary string $a\in \{0,1\}^*$  uniquely determines an element of $\mathscr{C}_n$, denoted $\psi(a)$.  
\begin{definition} \label{asdw34w3a}
A sequence $(\omega_n)$ in $C[0,1]$ is called a complex sequence  if $\omega_n \in \mathscr{C}_n$ for each $n$ and there is a constant $d > 0$ such that $K(\alpha_n) \geq n-d$ for all $n$, where $\alpha_n$ is the code  of $\omega_n$.  A function $\omega \in C[0,1]$ is a \emph{complex oscillation} if there is a complex sequence $(\omega_n)$ such that $\|\omega -\omega_n\|= \sup_{t\leq 1} |\omega(t) - \omega_n(t)|$ converges effectively to $0$ as $n \rightarrow \infty$ in the sense 
that there exists an algorithm that on input $m \in \mathbb{N}$ yields a natural number $m_0$ such that $\|\omega -\omega_n\|  \leq (m+1)^{-1}$ for all $m \in \mathbb{N}$ and $n > m_0$. 
\label{definition:ap}
\end{definition}  
We will denote the set of complex oscillations by $\mathscr{C}$. Asarin and Prokovskii \cite{ap:1} proved that  $\mathscr{C}$ is exactly the class of Martin-L\"of random elements of the computable probability space $(C[0, 1], \mathbb{P})$ where $\mathbb{P}$ is the Wiener measure. Consequently, $\mathscr{C}$ has full Wiener measure, that is, $\mathbb{P}(\mathscr{C}) =  1$. 

Fouch\'e \cite{fo:2} proved that any complex oscillation is completely determined by a single infinite binary $KC$-string and conversely each  such string determines a complex oscillation. 

In order to explore local properties of complex oscillations, Fouch\'e \cite{fo:1} introduced the  notion of effective generating sequence of subsets of $C[0, 1]$. We shall consider the following notations. For a subset $F$ of $C[0, 1]$ and $\epsilon>0$, $\bar F$ is the topological closure of $F$ and $O_\epsilon(F)$ is the $\epsilon$-neighbourhood of $F$,
 $$O_\epsilon(F) = \{f \in C[0, 1]: (\exists g \in F) (\|f- g\| < \epsilon)\}.$$
Conventionally we shall write $F^1 = F$ and $F^0$ the complement of $F$ in $C[0, 1]$. Throughout $\Sigma$ denotes the $\sigma$-algebra of $C[0, 1]$.
\begin{definition}\label{sew12dwe23}
A sequence $\mathscr{F}_0  = \{F_1, F_2, \ldots, F_n, \ldots\}$ in $\Sigma$ is an effective generating sequence if the following conditions are satisfied:
 \begin{itemize}
 \item[$(1)$] for each $F \in \mathscr{F}_0$ and  $\epsilon >0$, if  $G \in \{F, F^0, O_\epsilon(F), O_\epsilon(F^0)\}$, then $\mathbb{P}(\bar G) = \mathbb{P}(G).$
 \item[$(2)$] there is an algorithm that yields, for each finite sequence of integers $1 \leq i_1 < i_2 < \ldots < i_n$, the quantity
 $\mathbb{P}(F_{i_1} \cap \ldots \cap F_{i_n})$ with arbitrary accuracy, that is, for $k \in \mathbb{N}$, the algorithm yields a dyadic rational number $\beta_k$ such that
    $$\mathbb{P}(F_{i_1} \cap \ldots \cap F_{i_n}) -\beta_k| \leq 2^{-k}.$$
\item[$(3)$] for $n, i \in \mathbb{N}$, a rational number $\epsilon>0$ and $x \in \mathscr{C}_n$, both the relations $x \in O_\epsilon(F_i)$ and $x\in O_\epsilon(F_i^0)$ are recursive in $x, \epsilon, i$ and $n$, that is, there is an algorithm that on input $x, \epsilon, i, n$ yields ``yes'' if indeed $x \in O_\epsilon(F_i)$ and ``no'' if $x \notin O_\epsilon(F_i)$ and similarly for $x\in O_\epsilon(F_i^0)$. 
 \end{itemize}
\end{definition}
Let $\mathscr{F}_0$ be an effective generating sequence and $\mathscr{F}$ the algebra generated by $\mathscr{F}_0$, that is, the class of finite unions of sets of the form
 \begin{eqnarray} \label{sdw3ewa12}
F_{i_1}^{\delta_1} \cap F_{i_2}^{\delta_2}  \ldots \cap F_{i_n}^{\delta_n}
\end{eqnarray}
with $1 \leq i_1 < \ldots < i_n$ in $\mathbb{N}$ and $\delta_i \in \{0, 1\}$. 
Then one can  enumerate the elements of $\mathscr{F}$ in a sequence $T_1, T_2, \ldots, T_n, \ldots$ such that there is algorithm that on input $i$ yields a description of $T_i$ as a finite union of sets of the form (\ref{sdw3ewa12}). In that case, $T_1, T_2, \ldots, T_n, \ldots$ is called a recursive enumeration of $\mathscr{F}$ and we say that $\mathscr{F}$ is effectively generated by $\mathscr{F}_0$.  If $(T_i: i = 1,2, \ldots)$  is  a recursive enumeration of $\mathscr{F}$ effectively generated by the sequence $\mathscr{F_0}$, then there is an algorithm that yields, for each $i$, the quantity $\mathbb{P}(T_i)$ with arbitrary accuracy (\cite[Lemma 4.1]{fo:1}).  
The following notion is defined in \cite{fo:1}.
\begin{definition}
A set $A \subset C[0, 1]$ is of constructive measure 0, if there exist an effectively generated algebra $\mathscr{F}$, a corresponding recursive enumeration $(T_i: i=1,2, \ldots)$ of $\mathscr{F}$ and a recursive function $\phi: \mathbb{N} \times \mathbb{N} \to \mathbb{N}$ such that 
 $$A \subset \cap_{n} \cup_{m} T_{\phi(n, m)}$$ and $\mathbb{P}\left(\cup_m T_{\phi(n, m)}\right)$ converges effectively to 0 for $n \to \infty$. 
\end{definition}
The following result is an important property of complex oscillations \cite[Theorem 4.1]{fo:1}.
\begin{theorem} \label{e3rwdw12}
If $x$ is complex oscillation   and if $A$ is a set of constructive measure 0, then $x \notin A$. 
\end{theorem}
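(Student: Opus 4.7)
The plan is to reduce the claim to the Asarin--Prokhorovskii characterization of complex oscillations as precisely the Martin-L\"of random elements of $(C[0,1],\mathbb{P})$, which is quoted earlier in the text. It suffices to show that any set $A$ of constructive measure zero is contained in the intersection of a Martin-L\"of test. Given the data witnessing $A \subset \bigcap_n \bigcup_m T_{\phi(n,m)}$ with $\mathbb{P}(\bigcup_m T_{\phi(n,m)}) \to 0$ effectively, I would first use the effective convergence to compute a subsequence $\nu(k)$ with $\mathbb{P}(\bigcup_m T_{\phi(\nu(k),m)}) \leq 2^{-k-2}$, so that the target measure decays at a prescribed geometric rate.

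The key technical step is to replace each algebra set $T_{\phi(\nu(k),m)}$ by an effectively open superset $V_{k,m}$ whose measure exceeds $\mathbb{P}(T_{\phi(\nu(k),m)})$ by at most $2^{-k-m-2}$. Since each $T_i$ is a finite union of basic intersections $F_{i_1}^{\delta_1} \cap \cdots \cap F_{i_\ell}^{\delta_\ell}$, I inflate each factor $F_{i_j}^{\delta_j}$ to its open $\epsilon$-neighbourhood $O_\epsilon(F_{i_j}^{\delta_j})$ and take the corresponding finite union of finite intersections, which is manifestly open. Condition (1) of Definition~\ref{sew12dwe23}, namely $\mathbb{P}(\bar G) = \mathbb{P}(G)$ for $G \in \{F, F^0, O_\epsilon(F), O_\epsilon(F^0)\}$, together with $\bigcap_{\epsilon > 0} O_\epsilon(F) = \bar F$, ensures $\mathbb{P}(O_\epsilon(F_i^{\delta_i})) \downarrow \mathbb{P}(F_i^{\delta_i})$ as $\epsilon \downarrow 0$. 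Combined with condition (2), which provides effective rational approximations to $\mathbb{P}$ on $\mathscr{F}$, this lets me algorithmically search for a rational $\epsilon_{k,m} > 0$ small enough to meet the target bound. Condition (3) is what makes $V_{k,m}$ effectively open as an element of the Martin-L\"of framework of Section 2, since membership of the functions driving the ideal-ball enumeration in the $\epsilon$-neighbourhoods can be decided via their $\mathscr{C}_n$-approximations.

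Setting $U_k := \bigcup_m V_{k,m}$, one obtains a uniformly effective family of open sets with $\mathbb{P}(U_k) \leq \sum_{m \geq 1} (2^{-k-2-m} + \text{contribution from }T) \leq 2^{-k}$, so $(U_k)$ is a Martin-L\"of test. By construction $A \subset \bigcap_n \bigcup_m T_{\phi(n,m)} \subset \bigcap_k U_k$, and any complex oscillation $x$ passes every Martin-L\"of test, hence $x \notin A$. The main obstacle lies in the second paragraph: making the choice of $\epsilon_{k,m}$ and the description of $V_{k,m}$ as a union of ideal balls uniform in $(k,m)$, while keeping strict control on the measure inflation. The three clauses of Definition~\ref{sew12dwe23} are tailored precisely for this task, so that once they are orchestrated, the assembly of the test and the invocation of the Asarin--Prokhorovskii theorem are routine.
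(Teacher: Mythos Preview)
Your route is genuinely different from the one the paper has in mind. The paper does not prove Theorem~\ref{e3rwdw12} in the text but cites \cite[Theorem~4.1]{fo:1}; the argument there---sketched in the present paper as the proof of the closely related Theorem~\ref{sde3412wsw}---is a direct Kolmogorov-complexity construction. Assuming a complex oscillation $x$ lies in $\bigcap_n\bigcup_m T_{\phi(n,m)}$, one takes the complex sequence $(x_n)$ with $x_n\in\mathscr{C}_n$ approximating $x$ and builds a recursively enumerable semi-measure $\nu$ on $\{0,1\}^*$ that assigns mass at least $2^{-n+d}$ to the code of $x_n$, contradicting $K(x_n)\geq n-d$. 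The $O_\epsilon$-neighbourhoods and the counting of $\mathscr{C}_n$-paths inside them appear in that proof as well, but in the service of defining $\nu$, not of manufacturing a Martin-L\"of test.

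Your reduction to the Asarin--Prokhorovskii identification is more modular and is a legitimate alternative; once complex oscillations are known to coincide with ML-random paths, it does suffice to cover $A$ by a Martin-L\"of test. Two points, however, need more than you indicate. First, condition~(2) computes $\mathbb{P}$ only on $\mathscr{F}$, not on the inflated sets $\bigcap_j O_\epsilon(F_{i_j}^{\delta_j})$, so the ``search for a rational $\epsilon_{k,m}$'' is not immediate: you must use condition~(3) to count $\mathscr{C}_n$-paths in the inflated set and invoke weak convergence of the uniform measures on $\mathscr{C}_n$ to $\mathbb{P}$ (condition~(1) supplying the continuity sets) to get effective upper bounds---which is precisely the computation the semi-measure proof performs anyway. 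Second, condition~(3) gives decidability of $x\in O_\epsilon(F_i)$ only for $x\in\mathscr{C}_n$, not for centres of ideal balls in $\mathscr{S}$, so turning $V_{k,m}$ into an effective union of ideal balls also requires an extra step. Both gaps are fillable, but the saving over the original semi-measure argument is smaller than it first appears.
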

In some cases condition (3) of Definition \ref{sew12dwe23}  ``for $x \in \mathscr{C}_n$, $x \in O_{\epsilon}(F_i)$ is effective in $x, n, \epsilon, i$'' is not easy to verify as  generally it may require to  explicitly find  $f\in F_i$ such that $\|x - f\| < \epsilon$. We shall replace condition (3) with the following condition:
Condition $(3')$:  for each $i$,  both $F_i$ and its complement $F_i^0$ in $C[0, 1]$ have no isolated point and for $n, i \in \mathbb{N}$ and $x \in \mathscr{C}_n$, both the relations $x \in F_i$ and $x\in F_i^0$ are recursive in $x, i$ and $n$.  
This has the implication that if $x\in F_i^{\delta}$ with $\delta \in \{0, 1\}$  and if $x_n$ is a sequence in $C[0, 1]$ that converges to $x$ then $x_n \in F_i^\delta$ for all large $n$. 
Then if $$A = F_{i_1}^{\delta_1} \cap F_{i_2}^{\delta_2}  \ldots \cap F_{i_n}^{\delta_n},\,\,\delta_i \in \{0, 1\},$$
$x \in A$ and  $x_n$ is a sequence in $C[0, 1]$ that converges to $x$, then $x_n A$ for all large $n$. 
This also implies that  for $n, i \in \mathbb{N}$, a rational number $\epsilon>0$ and $x \in \mathscr{C}_n$, both the relations $x \in O_\epsilon(F_i \cap \mathscr{C}_n)$ and $x\in O_\epsilon(F_i^0 \cap \mathscr{C}_n)$ are recursive in $x, \epsilon, i$ and $n$. Indeed, it is enough to enumerate all the elements of $\mathscr{C}_n$ (which is finite) that are in $F_i$ (this can done effectively by Condition $(3')$) and test whether $\|x - y\| < \epsilon$ for each $y \in F_i \cap \mathscr{C}_n$. To show that  it can be done we prove 

 

\begin{theorem} \label{sde3412wsw}
Let $\mathscr{F}_0 = \{F_1, F_2, \ldots, F_n, \ldots\}$ be a sequence of subsets of $C[0, 1]$ satisfying the following conditions:
       \begin{itemize}
       \item[$(a)$]  for each $F \in \mathscr{F}_0$ and  $\epsilon >0$, if  $G \in \{F, F^0, O_\epsilon(F), O_\epsilon(F^0)\}$, then $\mathbb{P}(\bar G) = \mathbb{P}(G).$
 \item[$(b)$] there is an algorithm that yields, for each finite sequence of integers $1 \leq i_1 < \ldots < i_n$, the quantity
 $\mathbb{P}(F_{i_1} \cap \ldots \cap F_{i_n})$ with arbitrary accuracy,
 \item[$(c)$] for each $F \in \mathscr{F}_0$,  both $F$ and its complement $F^0$ in $C[0, 1]$ have no isolated point,
 \item[$(d)$] for $n, i \in \mathbb{N}$ and $x \in \mathscr{C}_n$, both the relations $x \in F_i$ and $x\in F_i^0$ are recursive in $x, i$ and $n$.  
       \end{itemize}
Let $T_1, T_2, \ldots, T_n, \ldots$ be a recursive enumeration of the algebra $\mathscr{F}$ generated by $\mathscr{F}_0$. If  $\phi: \mathbb{N} \times \mathbb{N} \to \mathbb{N}$ is recursive function  such that $$ \mathbb{P}\left(\cup_m T_{\phi(n, m)}\right)$$ converges effectively to 0 for $n \to \infty$, then  the set $$\cap_{n} \cup_{m} T_{\phi(n, m)}$$ does not contain a complex oscillation.               
\end{theorem}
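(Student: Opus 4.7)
The plan is to adapt the proof of Theorem \ref{e3rwdw12} so that condition $(3')$ replaces condition (3) of Definition \ref{sew12dwe23}. The original argument, going back to \cite{fo:1}, produces from a constructive measure-zero cover $\cap_n \cup_m T_{\phi(n,m)}$ an effective compression scheme that assigns short descriptions to the codes $\alpha_N$ of the piecewise-linear approximations $\omega_N \in \mathscr{C}_N$ of any complex oscillation $\omega$ lying in the cover, ultimately contradicting $K(\alpha_N) \geq N-d$. My plan is to show that condition $(3')$ is enough to run the same kind of argument, at the cost of replacing open $\epsilon$-neighbourhoods by the basic sets themselves.

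First, for each basic set $B = F_{i_1}^{\delta_1} \cap \ldots \cap F_{i_k}^{\delta_k}$ and any $x \in B$, the implication recorded after $(3')$, applied factor by factor, shows that if $x_N \to x$ in $C[0,1]$ then $x_N \in F_{i_j}^{\delta_j}$ for all $j$ and all large $N$, hence $x_N \in B$ for large $N$. In particular, for any complex oscillation $\omega \in B$ with approximating complex sequence $\omega_N \in \mathscr{C}_N$, $\omega_N \in B$ for large $N$. Second, by condition $(3')$, membership $x \in F_i$ and $x \in F_i^0$ is recursive for $x \in \mathscr{C}_N$, so membership in any basic set $B$, and therefore in any $T_j$, is recursive for $x \in \mathscr{C}_N$. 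Combined with condition (b), Lemma 4.1 of \cite{fo:1} then yields that $\mathbb{P}(T_j)$ is uniformly computable from $j$.

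Next, using the effective convergence $\mathbb{P}(V_n) \to 0$ with $V_n = \bigcup_m T_{\phi(n,m)}$, condition (a), and the Asarin--Prokovskii approximation of the Wiener measure by the uniform measure on $\mathscr{C}_N$, I would construct recursive functions $k,N:\mathbb{N}\to\mathbb{N}$ and an r.e.\ family of finite sets $S_n \subseteq \{0,1\}^{N(n)}$ satisfying $|S_n|/2^{N(n)} \leq 2^{-n+O(1)}$, such that whenever $\omega$ is a complex oscillation in $\cap_n V_n$ the code $\alpha_{N(n)}$ of $\omega_{N(n)}$ lies in $S_n$ for every $n$. The cardinality bound then yields descriptions of $\alpha_{N(n)}$ of length $N(n) - n + O(\log n)$ and hence $K(\alpha_{N(n)}) \leq N(n) - n + O(\log n)$, which for $n$ large contradicts the incompressibility $K(\alpha_{N(n)}) \geq N(n) - d$.

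The main obstacle is in this last step, because $V_n$ is a countable rather than a finite union, so $|V_n \cap \mathscr{C}_N|/2^N$ need not converge uniformly to $\mathbb{P}(V_n)$. One has to truncate $V_n$ to a finite subunion $V_n^{(k_n)} = \bigcup_{m \leq k_n} T_{\phi(n,m)}$, chosen large enough that $\omega_{N(n)}$ already falls in $V_n^{(k_n)}$ whenever $\omega \in V_n$, while simultaneously choosing $N(n)$ so that $|V_n^{(k_n)} \cap \mathscr{C}_{N(n)}|/2^{N(n)}$ is within $2^{-n+O(1)}$ of $\mathbb{P}(V_n^{(k_n)}) \leq 2^{-n}$. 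This must be done uniformly, without knowing in advance which $T_{\phi(n,m)}$ contains $\omega$, and the plan is to interleave the growth of $k_n$ and $N(n)$, using the first step to know that $\omega_{N(n)}$ will eventually drop into the growing truncation, and using computability of $\mathbb{P}(T_j)$ together with the effective density convergence from (a) to keep the cardinality of $S_n$ under control.
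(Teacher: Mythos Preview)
Your high-level strategy is exactly that of the paper: both of you adapt the argument behind Theorem~\ref{e3rwdw12} from \cite{fo:1}, replacing condition~(3) by~$(3')$, and both of you isolate the same two ingredients --- that $x_N\to x\in B$ forces $x_N\in B$ eventually for every basic set $B$ (via condition~(c)), and that membership of $\mathscr{C}_N$-elements in basic sets is decidable (via condition~(d)). The paper likewise notes that $\mathbb{P}(T_j)$ is uniformly computable and then, as in \cite{fo:1}, builds a recursively enumerable semi-measure $\nu$ by enumerating the pairs $(N,M)$ indexing the sets $A_{N,M}=T_{\phi(N,M)}$ and, for each such pair, spreading weight over the strings coding elements of $\mathscr{C}_n$ that lie in (a neighbourhood of) $A_{N,M}\cap\mathscr{C}_n$ for \emph{all} sufficiently large~$n$. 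The contradiction is then $K(x_n)\le n-d+O(1)$ for all large~$n$.

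Where your proposal diverges, and where the gap lies, is in the bookkeeping you sketch in the third and fourth paragraphs. You aim for a single recursive length $N(n)$ and a single finite set $S_n\subseteq\{0,1\}^{N(n)}$ per level~$n$, with $\alpha_{N(n)}\in S_n$ for \emph{every} complex oscillation $\omega\in\cap_n V_n$. But for a given $\omega\in V_n$ you only know $\omega\in T_{\phi(n,m_\omega)}$ for \emph{some} $m_\omega$, and the threshold $N_\omega$ beyond which $\omega_N\in T_{\phi(n,m_\omega)}$ depends on both $m_\omega$ and on how fast $\omega_N\to\omega$ relative to the particular basic set. Neither $m_\omega$ nor $N_\omega$ admits a bound uniform in~$\omega$, so no recursive choice of $k_n$ and $N(n)$ can guarantee $\omega_{N(n)}\in V_n^{(k_n)}$ for all $\omega$ in the cover. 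Your proposed ``interleaving'' of $k_n$ and $N(n)$ does not resolve this: letting both grow with $n$ still produces, for each fixed $n$, a single length $N(n)$, and the non-uniform threshold problem persists level by level. The semi-measure device in \cite{fo:1} and in the paper's proof is precisely what absorbs this non-uniformity: rather than committing to one length per level, one sums weights over \emph{all} pairs $(N,M)$ and over \emph{all} large~$n$, with the measure bound $\sum_M\mathbb{P}(A_{N,M})\le 2^{-N}$ (after disjointification) keeping $\sum_a\nu(a)\le 1$. Then whichever $(N,M)$ happens to catch a given~$\omega$, the weights already sit on $x_n$ for all $n$ beyond the $\omega$-dependent threshold, and the incompressibility of $(x_n)$ is violated.
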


\begin{proof}
 The proof is very similar to the proof of Theorem \ref{e3rwdw12} given in \cite{fo:1}.  As indicated before, there is an algorithm that yields, for each $i$, the quantity  $\mathbb{P}(T_i)$  with arbitrary accuracy.
 
For a string $a\in \{0,1\}^*$ the function $\psi(a)$ defined by $a$ shall also be denoted $a$.  Assume that there is a complex oscillation $x \in \cap_{n} \cup_{m} T_{\phi(n, m)}$. Set $A_{n,m} = T_{\phi(n,m)}$. 
Since $x$ is a complex oscillation, then there is a complex sequence $(x_n)$ in $C[0, 1]$ and a recursive function $f:\mathbb{N} \to \mathbb{Q}^+$ such that $f(n)$ converges effectively to 0 as $n  \to \infty$ with $\|x_n - x\| < f(n)$ for all $n$. We shall assume, without loss of generality, that $f$ is decreasing. Further we can assume without loss of generality that  
 $$ \mathbb{P}(\cup_m T_{\phi(n,m)}) \leq 2^{-n}.$$ 
and that  for each fixed $n$, the sets $A_{n,m}$, $m=1,2,\ldots$ are pairwise disjoint. This can be realised by replacing $A_{n,m}$ by $A_{n,m} \setminus \cup_{\ell < m} A_n,\ell.$ Then we have 
                          $$\sum_m \mathbb{P}(A_{n,m}) \leq 2^{-n}.$$
                          
  We shall construct a recursively enumerable semi-measure $\nu$ so that for given $d>0$, there exists some $n\in \mathbb{N}$ such that $\nu(a) \geq 2^{-n + d}$ for all $a \in \mathscr{C}_n \cap O_{f(n)} (A_{N, M} \cap \mathscr{C}_n)$ for some set $A_{N, M}$ containing $x$, $\nu(x_n) = 2^{-n +d}$. 
Since $x \in A_{N, M}$ and $(x_n)$ converges to $x$, then $x_n \in A_{N, M}$ for all large $n$ (from condition (c) of Theorem~\ref{sde3412wsw} which is condition $(3')$). Then for all large $n$, $x_n \in \mathscr{C}_n \cap O_{f(n)} (A_{N, M} \cap \mathscr{C}_n)$ and hence $\nu(x_n) = 2^{-n +d}$ for all large $n$.  Then  $$K(x_n) \leq -\log_2 \nu(x_n) + D \leq n - d + D$$ for some constant $D$. This contradicts the assumption that $x$ is a complex oscillation for $d$ large. 
 
 To construct $\nu$ we proceed as follows:
 We consider any standard enumeration of $\mathbb{N} \times \mathbb{N}$. Let $(1,1)$ be the first pair. Set $A = A_{2, 1}$ and $\eta = 2^{-2-1} = 2^{-3}$. Since $A \in \{T_1, T_2, \ldots\}$, we can effectively write $A$ as a finite union of sets $B_m$ where each $B_m$ is a finite intersection  $\cap_\ell D_{\ell, m}$ where each $D_{\ell, m}$ or its complement belong to $\mathscr{F}_0$. 

For a given $\epsilon >0$ and $n \in \mathbb{N}$, it is the case that
\begin{eqnarray*}
 \mathscr{C}_n \cap O_\epsilon(A \cap \mathscr{C}_n) & \subset & \mathscr{C}_n \cap O_\epsilon(A)\\
& = &  \mathscr{C}_n \cap O_\epsilon(\cup_m B_m) \\
& =  & \mathscr{C}_n \cap \cup_m  O_\epsilon(B_m)\\
 & = &  \mathscr{C}_n \cap \cup_m  O_\epsilon(\cap D_{\ell,m}) \\
 & \subset & \mathscr{C}_n \cap \left(\cup_m \cap_\ell O_\epsilon(D_{\ell,m})\right).  
 \end{eqnarray*}
 Let
       $$V_1(n, \epsilon) = \cup_m \cap_\ell O_\epsilon(D_{\ell,m}).$$
 Then it is well-known that 
            $$\lim_{n \to \infty} 2^{-n} |V_1(n, \epsilon)| = \mathbb{P}\left(\cup_m \cap_\ell O_\epsilon(D_{\ell,m})\right).$$
 (Here $|V_1(n, \epsilon)|$ denotes the number of elements in the set $V_1(n, \epsilon)$. Moreover,
      $$\lim_{\epsilon \to 0} \mathbb{P}\left(\cup_m \cap_\ell O_\epsilon(D_{\ell,m})\right) = \mathbb{P}\left(\cup_m \cap_\ell (\bar D_{\ell,m})\right) = \mathbb{P}(A).$$

 \end{proof}
 
 \section{Main results}\label{mainresults}

The first main result is related to the pathwise stochastic integral of the sign function with respect to a complex oscillation. The existence of the integral is proven in Mukeru \cite{Mukeru_JOC_2014}. 
  
 \begin{theorem}\label{djkals23}
 For any complex oscillation $\omega$, 
\begin{enumerate}
\item[$(i)$] the sequence of functions $f_n(., \omega): [0, 1] \to \mathbb{R}$, $n =1,2, \ldots,$ defined by
$$f_n(t,\omega) = (2 \epsilon_n)^{-1} \lambda\{s \leq t: |\omega(s)| \leq \epsilon_n\},\,\,\epsilon_n = 2^{-n}$$ 
converges uniformly on $[0, 1]$. The limit 
$$L(t,\omega) = \lim_{n\to \infty} (2 \epsilon_n)^{-1} \lambda\{s \leq t: |\omega(s)| \leq \epsilon_n\}$$ is the {\it (effective) local time} of the complex oscillation $\omega$ at the origin on the interval $[0, t]$. 
\item[$(ii)$] Any complex oscillation $\omega$ satisfies Tanaka's formula:  
\begin{eqnarray} \label{s2qwe2t}
L(t,\omega) = |\omega(t)| - \int_0^t \sign(\omega(s)) d\omega(s).
\end{eqnarray}
\end{enumerate}
\end{theorem}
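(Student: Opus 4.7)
The plan is to prove both parts simultaneously by combining a classical smooth-approximation argument with the effective pathwise It\^o formula from \cite{Mukeru_JOC_2014}. For each $n \geq 1$, introduce the $C^1$ regularisation $\phi_n:\mathbb{R}\to\mathbb{R}$ of $|x|$ defined by $\phi_n(x)=|x|$ when $|x|\geq\epsilon_n$ and $\phi_n(x)=(x^2+\epsilon_n^2)/(2\epsilon_n)$ when $|x|<\epsilon_n$. Then $\phi_n'(x)=\sign(x)$ outside $[-\epsilon_n,\epsilon_n]$, $\phi_n'(x)=x/\epsilon_n$ inside, and the almost-everywhere second derivative is $\phi_n''(x)=\epsilon_n^{-1}\mathbf{1}_{[-\epsilon_n,\epsilon_n]}(x)$. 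This last identity is precisely what makes the occupation functional $f_n(t,\omega)$ reappear in It\^o's formula.

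Applying the effective pathwise It\^o formula to $\phi_n(\omega(t))$, valid for every complex oscillation, gives
\begin{equation*}
f_n(t,\omega) \;=\; \tfrac{1}{2}\!\int_0^t \phi_n''(\omega(s))\,ds \;=\; \phi_n(\omega(t)) - \tfrac{\epsilon_n}{2} - \int_0^t \phi_n'(\omega(s))\,d\omega(s).
\end{equation*}
The algebraic term $\phi_n(\omega(t))-\epsilon_n/2$ converges uniformly in $t$ to $|\omega(t)|$, since $|\phi_n(x)-|x||\leq\epsilon_n/2$ pointwise in $x$. Hence part (i) is reduced to showing uniform convergence of the stochastic integrals $\int_0^t \phi_n'(\omega(s))\,d\omega(s)$ to $\int_0^t \sign(\omega(s))\,d\omega(s)$ for every $\omega\in\mathscr{C}$. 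Once that is established, passing to the limit in the displayed identity yields (i) and Tanaka's formula (ii) in one stroke, and the continuity of $L(\cdot,\omega)$ follows automatically from the continuity of $t\mapsto|\omega(t)|$ and of the pathwise stochastic integral.

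The main obstacle I expect is exactly this uniform convergence of stochastic integrals \emph{pathwise}, for every complex oscillation rather than almost surely. Classically, Doob's inequality combined with the $L^2$-isometry yields
\begin{equation*}
\mathbb{E}\!\left[\sup_{t\leq 1}\!\left(\int_0^t(\phi_n'-\sign)(\omega(s))\,d\omega(s)\right)^{\!2}\right] \;\leq\; 4\,\mathbb{E}\bigl[\lambda\{s\leq 1:|\omega(s)|<\epsilon_n\}\bigr] \;\longrightarrow\; 0,
\end{equation*}
so Chebyshev's inequality provides effective tail bounds on the sets
\begin{equation*}
B_{n,k} \;=\; \left\{\omega:\sup_{t\leq 1}\left|\int_0^t(\phi_n'-\sign)(\omega(s))\,d\omega(s)\right| > 2^{-k}\right\}.
\end{equation*}
To lift the convergence to every complex oscillation, I would use the effective construction of the pathwise integral in \cite{Mukeru_JOC_2014} to realise $\bigcup_{n\geq N(k)} B_{n,k}$ as an element of an effectively generated algebra satisfying conditions $(a)$--$(d)$ of Theorem~\ref{sde3412wsw}, with $\mathbb{P}\bigl(\bigcup_{n\geq N(k)} B_{n,k}\bigr)$ tending to $0$ effectively in $k$. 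Theorem~\ref{sde3412wsw} then excludes every complex oscillation from the constructive null set $\bigcap_k \bigcup_{n\geq N(k)} B_{n,k}$, delivering the pathwise uniform convergence needed, and completing both parts.
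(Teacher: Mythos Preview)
Your overall strategy---smooth $|x|$, apply the pathwise It\^o formula, then let the smoothing parameter tend to zero---is the same as the paper's; indeed your $\phi_n$ is exactly $f_{\epsilon_n}+h_{\epsilon_n}$ in the paper's notation, so you are just combining the $x^+$ and $x^-$ halves into one step. However, there is a genuine gap at the point where you write ``applying the effective pathwise It\^o formula to $\phi_n(\omega(t))$, valid for every complex oscillation''. Your $\phi_n$ is only $C^1$: its second derivative $\epsilon_n^{-1}\mathbf{1}_{[-\epsilon_n,\epsilon_n]}$ is discontinuous at $\pm\epsilon_n$, and the effective It\^o lemma (Theorem~\ref{qaqsaersds}) requires a genuine $C^2$ function. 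The paper flags exactly this issue (``Since $f''$ is not continuous, It\^o's formula is not applicable here'') and resolves it with an additional layer of approximation, mollifying $f_\epsilon$ by a uniformly computable sequence $(\phi_n)$ of $C^\infty$ bump functions to obtain $g_n=\phi_n*f_\epsilon$, applying Theorem~\ref{qaqsaersds} to each $g_n$, and then passing to the limit via Theorem~\ref{sddfsdsse3ed} and the bounded convergence theorem (using that level sets of a complex oscillation at computable levels have Lebesgue measure zero). You need this extra mollification step, and along the way you must check the computability hypotheses so that Theorem~\ref{sddfsdsse3ed} applies.

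A second, softer issue concerns your route to the uniform convergence of $\int_0^t\phi_n'(\omega(s))\,d\omega(s)$. Your Doob--Chebyshev estimate is correct and gives the right effective decay, but the step ``realise $\bigcup_{n\ge N(k)}B_{n,k}$ as an element of an effectively generated algebra satisfying (a)--(d) of Theorem~\ref{sde3412wsw}'' is where all the work hides: the sets $B_{n,k}$ are defined through pathwise stochastic integrals, themselves limits of Riemann-type sums, and placing them in such an algebra essentially amounts to reproving Theorem~2 of \cite{Mukeru_JOC_2014}. The paper avoids this by invoking the packaged convergence result Theorem~\ref{sddfsdsse3ed}: once one verifies the $L^2$-rate hypothesis~(\ref{esad23ds}) for $f_N=f'_{\epsilon_N}$ (which is the computation~(\ref{eqawwee34})), the pathwise uniform convergence for every complex oscillation follows directly. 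Using Theorem~\ref{sddfsdsse3ed} in place of your appeal to Theorem~\ref{sde3412wsw} would close this part of the argument without further effort.
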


 The limit 
$$L(t,\omega) = \lim_{n\to \infty} (2 \epsilon_n)^{-1} \lambda\{s \leq t: |\omega(s)| \leq \epsilon_n\}$$ is in fact the local time at the origin of the continuous function  $\omega$ as defined in real analysis.


The proofs of the theorem and its corollary are given in section 5. 

 The following result shows that we can approximate the effective local times of a complex oscillation $\omega$ by the  sum of absolute values of $\omega$ at dyadic points. 
 \begin{theorem}\label{enyanya}
  For any complex oscillation $\omega$, 
  $$L(t, \omega) = 2 \lim_{m\to \infty} \sum_{k \in S_m}|\omega(k/2^m)|$$ 
  where $S_m$ is the subset of $\{1, 2, \ldots, \ell\}$, $\ell = \lfloor t 2^m\rfloor$, defined by
    $$k \in S_m \mbox{ iff } \sign(\omega(k/2^m)) \ne \sign(\omega((k-1)/2^m)).$$
 \end{theorem}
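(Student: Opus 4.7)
The plan is to derive the identity from Tanaka's formula (Theorem~\ref{djkals23}(ii)) by applying it to a telescoping decomposition of $|\omega|$ along the dyadic mesh. Set $\ell = \lfloor t 2^m \rfloor$ and $t_k = k/2^m$, and write
$$|\omega(t_\ell)| = \sum_{k=1}^{\ell} \bigl(|\omega(t_k)| - |\omega(t_{k-1})|\bigr).$$
For each $k$ I split the $k$th term depending on whether $k \in S_m$. If $\sign(\omega(t_k)) = \sign(\omega(t_{k-1}))$ (the no-sign-change case), then a direct computation gives
$$|\omega(t_k)| - |\omega(t_{k-1})| = \sign(\omega(t_{k-1}))\bigl(\omega(t_k) - \omega(t_{k-1})\bigr).$$
If instead $k \in S_m$, a case analysis on the two sign patterns shows that
$$|\omega(t_k)| - |\omega(t_{k-1})| = \sign(\omega(t_{k-1}))\bigl(\omega(t_k) - \omega(t_{k-1})\bigr) + 2|\omega(t_k)|.$$
(Dyadic points where $\omega$ vanishes exactly can be absorbed using the convention $\sign(0)=0$; they produce at most a vanishing correction as $m\to\infty$, so this is a technicality rather than a genuine obstacle.)

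Summing over $k = 1, \ldots, \ell$ and rearranging yields
\begin{equation}\label{planeq}
2 \sum_{k \in S_m} |\omega(t_k)| \;=\; |\omega(t_\ell)| \;-\; \sum_{k=1}^{\ell} \sign(\omega(t_{k-1}))\bigl(\omega(t_k) - \omega(t_{k-1})\bigr).
\end{equation}
The next step is to pass to the limit $m \to \infty$ on the right-hand side of \eqref{planeq}. Continuity of $\omega$ forces $|\omega(t_\ell)| \to |\omega(t)|$ since $t_\ell = \lfloor t 2^m\rfloor / 2^m \to t$. The sum on the right is exactly the dyadic left-endpoint Riemann sum used in Mukeru \cite{Mukeru_JOC_2014} to define the pathwise stochastic integral of $\sign(\omega)$ against $\omega$ along any complex oscillation, and by that result it converges to $\int_0^t \sign(\omega(s))\,d\omega(s)$.

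Combining these two limits with Tanaka's formula from Theorem~\ref{djkals23}(ii) gives
$$2 \lim_{m \to \infty} \sum_{k \in S_m} |\omega(t_k)| \;=\; |\omega(t)| \;-\; \int_0^t \sign(\omega(s))\,d\omega(s) \;=\; L(t, \omega),$$
which is the desired identity. The main obstacle is the second limit: one must invoke the pathwise convergence of the dyadic Riemann sums for $\int_0^\cdot \sign(\omega)\,d\omega$ along every complex oscillation, which is the nontrivial effective integrability result from \cite{Mukeru_JOC_2014}. Once that is in hand, the algebraic identity \eqref{planeq} and Tanaka's formula (already established in Theorem~\ref{djkals23}) close the argument. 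A secondary bookkeeping point is handling the possibly incomplete last interval $[t_\ell, t]$ and dyadic zeros of $\omega$; both contribute only $O(\|\omega\|_{\text{osc on }[t_\ell,t]})$ and vanish by uniform continuity of $\omega$.
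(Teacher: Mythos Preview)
Your proof is correct and follows essentially the same route as the paper: both arguments combine the elementary algebraic identity $|x_\ell| - \sum_{k=1}^\ell \sign(x_{k-1})(x_k - x_{k-1}) = 2\sum_{k\in S}|x_k|$ with the convergence of the dyadic Riemann sums for $\int_0^t \sign(\omega(s))\,d\omega(s)$ from \cite{Mukeru_JOC_2014} and Tanaka's formula from Theorem~\ref{djkals23}(ii). The paper passes briefly through the Schauder approximation $\omega_m$, but since $\omega_m(k/2^m)=\omega(k/2^m)$ this is cosmetic; also, your worry about dyadic zeros is moot because a complex oscillation never vanishes at a nonzero dyadic rational (see the remark at the end of the proof of Theorem~\ref{djkals23}).
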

 This implies directly the following result in the classical context.
\begin{corollary} \label{323aarsxza1}
The local times of Brownian motion at the origin satisfy the following property: Almost surely, for all $t\in [0, 1]$,  $$\mathscr{L}(t, X) = 2 \lim_{m\to \infty} \sum_{k \in S_m}|X(k/2^m)|$$  where $S_m$ is the subset of $\{1, 2, \ldots, \ell\}$, $\ell = \lfloor t 2^m\rfloor$, defined by
    $$k \in S_m \mbox{ iff } \sign(\omega(k/2^m)) \ne \sign(\omega((k-1)/2^m)).$$
\end{corollary}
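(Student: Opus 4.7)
The plan is to derive this corollary as an almost immediate transfer of Theorem \ref{enyanya} from the deterministic setting of complex oscillations to the probabilistic setting of Brownian motion. The bridge is supplied by two facts already established in the paper: the Asarin--Prokovskii identification of $\mathscr{C}$ with the Martin-L\"of random elements of $(C[0,1], \mathbb{P})$, which implies $\mathbb{P}(\mathscr{C}) = 1$, and the remark following Theorem \ref{djkals23} stating that the effective local time $L(t,\omega)$ of a complex oscillation $\omega$ coincides with the classical occupation-density local time at the origin. Since the canonical Brownian motion is $X(t,\omega) = \omega(t)$ with $\omega$ drawn according to the Wiener measure, the classical local time $\mathscr{L}(t, X)$ on $C[0,1]$ is, by Trotter's continuity theorem and the identity (\ref{sdsd343wed3}), equal to $\mathscr{L}(t, \omega)$ for almost every $\omega$, and for every such $\omega \in \mathscr{C}$ this in turn equals $L(t, \omega)$.

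Concretely, I would proceed as follows. First, fix the full-measure event $\mathscr{C} \subset C[0,1]$ of complex oscillations. For any $\omega \in \mathscr{C}$, Theorem \ref{enyanya} gives, for every $t \in [0,1]$,
\begin{equation*}
L(t, \omega) = 2 \lim_{m\to \infty} \sum_{k \in S_m} |\omega(k/2^m)|,
\end{equation*}
where $S_m$ is the subset of $\{1, \ldots, \lfloor t 2^m \rfloor\}$ consisting of indices $k$ at which $\sign(\omega(k/2^m)) \ne \sign(\omega((k-1)/2^m))$. Combining this with the identification $L(t, \omega) = \mathscr{L}(t, \omega)$ valid on $\mathscr{C}$ yields the desired identity for every $\omega \in \mathscr{C}$ and every $t \in [0,1]$ simultaneously. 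Since $\mathbb{P}(\mathscr{C}) = 1$, this is exactly the almost sure statement of the corollary.

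There is essentially no technical obstacle, because all of the work has been done in Theorem \ref{enyanya} and in the earlier identification of effective with classical local time; the only subtle point to make explicit is the quantifier order, namely that the convergence in Theorem \ref{enyanya} is uniform in $t$ (or at least holds for every $t$ at once for any fixed $\omega \in \mathscr{C}$), so that the almost sure event on which the representation holds can be chosen independently of $t$. This is guaranteed because the identity is derived pathwise for each $\omega \in \mathscr{C}$, and $\mathscr{C}$ is a single event of full measure.
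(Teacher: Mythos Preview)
Your proposal is correct and is exactly the argument the paper intends: the paper simply states that Theorem \ref{enyanya} ``implies directly the following result in the classical context'' without spelling out the transfer, and your write-up makes explicit precisely the two ingredients needed---$\mathbb{P}(\mathscr{C})=1$ and the identification $L(t,\omega)=\mathscr{L}(t,\omega)$ for $\omega\in\mathscr{C}$---together with the observation that the pathwise nature of Theorem \ref{enyanya} gives the correct quantifier order (one null set works for all $t$).
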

 The proof of Theorem \ref{enyanya} is given in section 6.

 \section{Pathwise stochastic integral with respect to complex oscillations} \label{mushana}
 
Local times of Brownian motion have been obtained as stochastic integrals with respect to Brownian motion. Pathwise stochastic integral with respect to  Martin-L\"of random Brownian motion or complex oscillation  was introduced in \cite{Mukeru_JOC_2014}.

Denote by $X: [0, 1]\times C[0, 1] \to \mathbb{R}$ the standard one-dimensional Brownian motion on $[0, 1]$. For any $t \in [0, 1]$ and $\omega \in C[0, 1]$, $X(t, \omega) = \omega(t)$. 


 We will consider real functions $f:[0, 1] \times C[0, 1] \to \mathbb{R}$, $(t,\omega) \mapsto f(t,\omega) = \varphi(\omega(t))$ where $\varphi: \mathbb{R} \to \mathbb{R}$ is a Borel function. We investigate such  functions $f$ satisfying the following three conditions: \\ 

\begin{itemize}
\item
\begin{eqnarray}\label{sdw345ers}
\mathbb{P}\left[\omega \in \Omega:\int_0^1 f^2(t,\omega) dt <\infty\right] = 1. 
\end{eqnarray} 
\item
The sequence of simple functions $f_n: [0, 1] \times C[0, 1]$, $n = 1, 2, \ldots$ defined by $$f_n(t, \omega) = f(k/2^n, \omega) \mbox{ for } k/2^{n} \leq t \leq (k+1)/2^{n}, \,\, k = 0, 1, 2, \ldots, 2^n-1$$ converges to $f$ in the  sense that for all large $n\geq 1$, 
 \begin{eqnarray} \label{eqnkaa2}
\mathbb{P}\left[\omega: \int_0^1 \left(f(s,\omega)-f_n(s,\omega)\right)^2 ds > 2^{-n }\right] \leq 2^{-n}
 \end{eqnarray}\\
\item
The function $f$ is computable in the sense that there exists an algorithm that on input $(t_i, x_i)$ with $t_i$ rational and $x_i \in \mathscr{S}$ yields an approximation of $f(t_i, x_i)$ with arbitrary accuracy. (We recall that $\mathscr{S}$ is the class of piecewise linear functions in $C[0, 1]$ with rational coordinates for its points of non-differentiability.)
 \end{itemize}

Condition (\ref{eqnkaa2}) implies that 
  \begin{eqnarray} \label{eqnwe23de}
\mathbb{P}\left[\omega: \int_0^1 \left(f_n(s,\omega)-f_{n-1}(s,\omega)\right)^2 ds > 2^{-n+4}\right] \leq 3\times 2^{-n}.
 \end{eqnarray}
Indeed, using the $L^2$-norm, 
  \begin{eqnarray*}
 \left(\int_0^1 |f_{n}(s)- f_{n-1}(s)|^2 ds\right)^{1/2} & =  & \left(\int_0^1 |f_{n}(s)- f(s) + f(s)- f_{n-1}(s)|^2 ds\right)^{1/2}\\
&  \leq & \left(\int_0^1 (f_{n}(s)- f(s))^2 ds \right)^{1/2} \\
&& + \left(\int_0^1 (f(s)- f_{n-1}(s)|^2 ds\right)^{1/2}. 
\end{eqnarray*}
Set $$T_n = \int_0^1 (f_{n}(s)- f(s))^2 ds.$$ The the immediate implication
$(a+b\geq c)\longrightarrow  (a\geq c/2 \mbox{ or } b\geq c/2)$ yields
 \begin{eqnarray*}
 \mathbb{P}\left[\int_0^1 |f_{n}(t) - f_{n-1}(t)|^2 dt > 2^{-n+4}\right] & = & \mathbb{P}\left[\left(\int_0^1 |f_{n}(t) - f_{n-1}(t)|^2 dt\right)^{1/2}> 2^{(-n+4)/2}\right] \\
 &\leq & \mathbb{P}[T_n^{1/2} + T_{n-1}^{1/2} > 2^{(-n+4)/2}]\\
 &\leq & \mathbb{P}[T_n^{1/2}> 2^{(-n+2)/2} \mbox{ or } T_{n-1}^{1/2}> 2^{(-N+2)/2}]\\
 &\leq & \mathbb{P}[T_n^{1/2}> 2^{(-n+2)/2}] + \mathbb{P}[T_{n-1}^{1/2}> 2^{(-n+2)/2}]\\
 &=& \mathbb{P}[T_n> 2^{(-n+2)}] + \mathbb{P}[T_{n-1}> 2^{(-n+2)}]\\
 & \leq & \mathbb{P}[T_n> 2^{-N}] + \mathbb{P}[T_{n-1}> 2^{-n+1}]\\
 & \leq  & 2^{-n} + 2^{-n-1}\\
 &\leq & 3 \times 2^{-n}. 
 \end{eqnarray*}
For each integer $n \geq 1$ and $\omega \in C[0, 1]$, the function 
 $I(f, ., \omega): [0, 1] \to \mathbb{R}$ defined by
\begin{eqnarray*}
I(f_n, t, \omega) & = & \sum_{k=1}^{\ell} f((k-1) 2^{-n},\omega) (\omega(k 2^{-n}) - \omega((k-1) 2^{-n})) \\
&& + f(\ell 2^{-n},\omega) (\omega(t) - \omega(\ell 2^{-n})) ,\,\,  \ell =\lfloor  2^n t\rfloor. 
\end{eqnarray*}
 is called the pathwise stochastic integral of $f_n$ with respect to the path $\omega$.  The quantity 
$I(f_n, t, \omega)$ is denoted $\int_0^t f_n(s, \omega) d\omega(s)$. \\
It is shown in \cite[Theorem 2]{Mukeru_JOC_2014} that if a function $f: [0, 1] \times C[0, 1]$ satisfies all the three conditions  as stated above, then for any {\it complex oscillation} $\omega$, the sequence of functions $I(f_n,., \omega)$ converges uniformly on $[0, 1]$. The limit function is called the pathwise stochastic integral of $f$ with respect to $\omega$. It is denoted
      $$I(f, t, \omega) := \int_0^t f(s, \omega) d\omega(s) := \lim_{n\to \infty} \int_0^t f_n(s, \omega) d\omega(s).$$ 
Moreover the pathwise stochastic integral of $f$ does not depend on the particular approximation sequence of simple functions $(f_n)$ in the sense that if $(g_n)$ is any other sequence of simple functions satisfying these  conditions, then 
    for any complex oscillation $\omega$,
  $$\lim_{n \to \infty} \int_0^t f_n(s, \omega) d\omega(s) = \lim_{n \to \infty} \int_0^t g_n(s, \omega) d\omega(s)$$ 
  uniformly in $t \in [0, 1]$. 
It is shown in the proof of Theorem 2  in \cite{Mukeru_JOC_2014} that condition (\ref{eqnwe23de}) implies that 
there exists a constant $K$ such that for any complex oscillation $\omega$ there exists an integer $n_0$ such that for all $m, n \geq 0$ and for any $t \in [0, 1]$, 
\begin{eqnarray} \label{mushana1}
\left|\int_0^t (f_m(s, \omega) - f_n(s, \omega)) d\omega(s)\right| \leq n 2^{-n/2} K.
\end{eqnarray}
We will make use of the following property of pathwise stochastic integration.
\begin{theorem}\label{sddfsdsse3ed}
Let $f_N: [0, 1] \times C[0, 1] \to \mathbb{R}$, $N = 1,2, \ldots$ be a sequence of functions such that there exists a sequence of functions $\varphi_N: \mathbb{R} \to \mathbb{R}$, $N = 1,2,\ldots$ satisfying
 $f(t, \omega) = \varphi(\omega(t))$ for all $t\in [0,1]$ and $\omega \in C[0, 1]$. Assume that the following conditions are satisfied: \begin{itemize}
 \item[$(1)$] each function $f_N$ satisfies condition (\ref{sdw345ers}), 
\item[$(2)$] for each $f_n$, the sequence $(f_{N,n}), n = 1,2, \ldots$ defined by 
$$f_{N,n}(t, \omega) = f_N(k/2^n, \omega) \mbox{ for } k/2^{n} \leq t < (k+1)/2^{n}, \,\, k = 0, 1, 2, \ldots, 2^n-1$$ converges to $f_N$ in the sense that for all large $n\geq 1$, 
  \begin{eqnarray} \label{eqnkaa1dk3453}
\mathbb{P}\left[\omega: \int_0^1 \left(f_{N,n}(s,\omega)-f_N(s,\omega)\right)^2 ds > 2^{-n }\right] \leq 2^{-n},
 \end{eqnarray}
\item[$(3)$] the sequence $(f_N)$ is uniformly computable in the sense that there exists an algorithm that, on input $N \geq 1$, $t_i$ rational in $[0, 1]$ and $x_i \in \mathscr{S}$, yields $f_N(t_i, x_i)$ with arbitrary accuracy,
\item[$(4)$] the sequence $(f_N)$ converges to a function $f$ in the sense that \begin{eqnarray} \label{esad23ds}
P\left[\omega: \int_0^1 |f_{N}(t,\omega) - f(t,\omega)|^2 dt > 2^{-N}\right] \leq \, 2^{-N },\,\forall N.
\end{eqnarray}
\end{itemize}
 Then for any complex oscillation $\omega$, the pathwise stochastic integral of $f$ with respect to $\omega$ exists and 
$$\lim_{N\to \infty} \int_0^t f_N(s, \omega(s)) d\omega(s) = \int_0^t f(s, \omega(s))d\omega(s)$$ uniformly in $ t\in [0, 1]$.  
\end{theorem}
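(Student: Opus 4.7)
The plan is to (i) establish the existence of each individual integral $I_N(t,\omega):=\int_0^t f_N(s,\omega)\,d\omega(s)$ for every complex oscillation $\omega$ via Theorem 2 of \cite{Mukeru_JOC_2014} applied to $f_N$; (ii) show that $(I_N(\cdot,\omega))_N$ is uniformly Cauchy in $t$ for every complex oscillation $\omega$, hence converges uniformly to some $J(t,\omega)$; and (iii) identify $J(t,\omega)$ with $\int_0^t f(s,\omega)\,d\omega(s)$ via a diagonal simple-function approximation of $f$. Step (i) is immediate: conditions (1)--(3) are exactly the hypotheses of Theorem 2 of \cite{Mukeru_JOC_2014} for each $f_N$, with $(f_{N,n})_n$ as the dyadic simple-function approximants and (\ref{mushana1}) as the quantitative uniform rate of convergence.

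For step (ii), the central tool is Doob's $L^2$ maximal inequality combined with a stopping-time truncation. For $N<M$, the process $t\mapsto\int_0^t (f_N-f_M)\,d\omega$ is a continuous martingale whose quadratic variation equals $\int_0^t (f_N-f_M)^2 \,ds$. Stopping at $\tau:=\inf\{t:\int_0^t (f_N-f_M)^2\,ds>B\}$ with $B:=8\cdot 2^{-N}$, Doob's inequality gives
\begin{equation*}
\mathbb{P}\!\left[\sup_{t\in[0,1]}\left|\int_0^{t\wedge\tau}(f_N-f_M)\,d\omega\right|>2^{-k}\right] \leq 4B\cdot 2^{2k}=32\cdot 2^{2k-N},
\end{equation*}
while $\mathbb{P}[\tau<1]=\mathbb{P}\bigl[\int_0^1(f_N-f_M)^2\,ds>B\bigr]\leq 2\cdot 2^{-N}$ by (4) and the $L^2$ triangle inequality, so altogether $\mathbb{P}[\sup_t|I_N-I_M|>2^{-k}]\leq 34\cdot 2^{2k-N}$. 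Choosing $k=\lfloor N/4\rfloor$ and $M=N+1$, the sets $B_N:=\{\omega:\sup_{t}|I_N(t,\omega)-I_{N+1}(t,\omega)|>2^{-N/4}\}$ are of summable $\mathbb{P}$-measure; once $B_N$ is majorised by a cylinder event in an effective generating algebra on $C[0,1]$ (see the obstacle below), the set $\bigcap_k\bigcup_{N\geq k}B_N$ is of constructive measure zero, and Theorem \ref{sde3412wsw} yields that no complex oscillation belongs to it. Hence $(I_N(\cdot,\omega))$ converges uniformly in $t$ to some $J(t,\omega)$ for every complex oscillation $\omega$.

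For step (iii), introduce the diagonal approximant $h_n(t,\omega):=f_{n,n}(t,\omega)$. The triangle inequality in $L^2(ds)$ applied to (2) with $N=n$ together with (4) gives, up to absolute constants, the analogue of (\ref{eqnkaa2}) for $f$ with approximation sequence $(h_n)$; (3) provides the computability of $(h_n)$, and (1) for $f$ is a consequence of (4) and the triangle inequality. Thus $f$ satisfies the hypotheses of Theorem 2 of \cite{Mukeru_JOC_2014}, so $\int_0^t f(s,\omega)\,d\omega(s):=\lim_n\int_0^t h_n\,d\omega$ exists uniformly in $t$ for every complex oscillation. A second Doob-plus-stopping-time estimate, now comparing $\int h_n\,d\omega$ with $I_n$ (their difference has quadratic variation $O(2^{-n})$ with probability at least $1-O(2^{-n})$, again by (2) and (4)), shows the two sequences share the same uniform limit, giving $J(t,\omega)=\int_0^t f(s,\omega)\,d\omega(s)$.

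The main obstacle is the effectiveness in step (ii): the bad sets $B_N$ must be enlarged, with a controlled and summable loss of probability, to cylinder events in an algebra generated by an effective generating sequence of the form $\{\omega(q)<a: q,a\in\mathbb{Q}\cap[0,1]\}$, so that Theorem \ref{sde3412wsw} may be invoked. The standard reduction discretises the continuous supremum over $t\in[0,1]$ as a maximum over sufficiently fine dyadic times (with error controlled by an effective Brownian modulus-of-continuity estimate on $\mathscr{C}_n$ together with (\ref{mushana1}) for each fixed $N$), and approximates $I_N-I_{N+1}$ by the finite sum $\int_0^\cdot(f_{N,n}-f_{N+1,n})\,d\omega$, which depends polynomially on the values of $\omega$ at dyadic times. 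Once this effectivity is in place, the remainder of the proof is a routine interplay of $L^2$-maximal inequalities, triangle inequalities, and the avoidance of constructive-measure-zero sets by complex oscillations.
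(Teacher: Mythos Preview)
Your outline is correct, but it takes a longer route than the paper, and the extra work you flag as the ``main obstacle'' is exactly what the paper's argument sidesteps.

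Both you and the paper use the diagonal sequence $h_N:=f_{N,N}$ together with Theorem~2 of \cite{Mukeru_JOC_2014} to define $\int_0^t f\,d\omega$ for every complex oscillation; that is your step~(iii) and the paper's opening move. The divergence is in how one passes from $\int f_{N,N}\,d\omega$ to $I_N=\int f_N\,d\omega$. You compare $I_N$ and $I_{N+1}$ directly via Doob's inequality plus a stopping-time truncation, and then must enlarge the bad events $B_N$ to cylinder sets in an effective generating algebra so that Theorem~\ref{sde3412wsw} applies. Carrying this out means discretising the supremum in $t$ and replacing each $I_N$ by its simple-function approximant $\int f_{N,n}\,d\omega$; in effect you are re-proving the quantitative estimate (\ref{mushana1}) in a slightly extended setting.

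The paper instead reuses (\ref{mushana1}) as a black box. Having shown $\int f\,d\omega=\lim_N\int f_{N,N}\,d\omega$, it applies (\ref{mushana1}) \emph{uniformly in $N$} (this is where the uniform computability condition~(3) enters) to the family $(f_{N,n})_{n}$, obtaining a single $n_0$ and constant $K$ with
\[
\Bigl|\int_0^t (f_{N,m}-f_{N,N})\,d\omega\Bigr|\le K\,N\,2^{-N/2}\qquad\text{for all }m,N>n_0,\ t\in[0,1].
\]
Letting $m\to\infty$ (using $\int f_{N,m}\,d\omega\to I_N$ uniformly) gives $\bigl|I_N-\int f_{N,N}\,d\omega\bigr|\le K\,N\,2^{-N/2}$, whence $\lim_N I_N=\lim_N\int f_{N,N}\,d\omega=\int f\,d\omega$. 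No new Martin--L\"of test or constructive-null-set argument is built from scratch.

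In short: your approach is sound, but it reproduces work already encapsulated in (\ref{mushana1}); the paper's proof is shorter precisely because it invokes that bound twice---once for $f$ via the diagonal, and once uniformly in $N$ for each $f_N$---rather than constructing the effective test by hand.
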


\begin{proof}
Denote, for fixed $N,\,n$, 
\begin{eqnarray*}
A & = &\int_0^1 |f_N(s) - f(s)|^2 ds,\\
 B_{n} &= &\int_0^1 |f_{N,n}(s) - f_N(s)|^2 ds,\\
 C  & =& \int_0^1 |f_{N,N}(s) - f(s)|^2 ds,
 \end{eqnarray*}
Then $\mathbb{P}(A >2^{-N}) \leq 2^{-N}$, $\mathbb{P}(B_{n} > 2^{-n}) \leq 2^{-n}$ and 
$\mathbb{P}(C > 2^{-N+4}) \leq 2^{-N+1}.$  Then as discussed before, Theorem 2  in \cite{Mukeru_JOC_2014} implies, for any complex oscillation $\omega$, both the existence of the pathwise stochastic integral of $f$ with respect to $\omega$ and the identity 
 \begin{eqnarray}\label{31242waa}
\int_0^t f(s, \omega(s))d\omega(s)= \lim_{N\to \infty} \int_0^t f_{N,N}(s, \omega(s)) d\omega(s)
\end{eqnarray}
 uniformly in $t \in [0, 1]$. \\
Similarly,
 $$\mathbb{P}\left[\int_0^1 (f_{N,N}(s) - f_{N,N-1}(s))^2 ds > 2^{-N+4}\right]\leq 2^{-N} + 2^{-N +1} = 3\times 2^{-N}.$$ 
Then, again Theorem 2  in \cite{Mukeru_JOC_2014} implies that, there exists a constant $K>0$ such that for any complex oscillation $\omega$, there exists an integer $n_0 > 0$ such that for all $m, N > n_0$ and for all $t \in [0, 1]$,
 $$\left|\int_0^t (f_{N, m}(s, \omega) - f_{N, N}(s, \omega)) d\omega(s)\right| \leq N 2^{-N/2} K.$$
Now taking the limit for $m \to \infty$  and $N$ fixed yields,
 $$\left|\int_0^t (f_{N}(s, \omega) - f_{N, N}(s, \omega)) d\omega(s)\right| \leq N 2^{-N/2} K$$ since 
    $$\lim_{m\to \infty} \int_0^t f_{N, m}(s, \omega) d\omega(s) = \int_0^t f_N(s, \omega) d\omega(s)$$ uniformly in $t \in [0, 1]$. 
It follows  that 
  \begin{eqnarray}\label{rsse34edsw}
\lim_{N\to \infty} \int_0^{t} f_{N,N}(s,\omega)d\omega(s) = \lim_{N\to \infty} \int_0^t f_{N}(s,\omega)d\omega(s)
\end{eqnarray}
uniformly in $t \in [0, 1]$. 
Relations (\ref{31242waa}) and (\ref{rsse34edsw}) conclude the proof. 
\end{proof}

\section{Effective local times of complex oscillations}  \label{sdsde312wa}

\subsection{It\^o's lemma for complex oscillations}
 In the next section, we will make use of the following effective version of Ito's lemma \cite{Mukeru_JOC_2014}.
 \begin{theorem} \label{qaqsaersds}
Let $f :[0, 1] \times \mathbb{R}  \to \mathbb{R}$, $(t,x) \mapsto f(t, x)$ be a $C^2$-function such that 
\begin{eqnarray} \label{e234wwaq}
\sup_{t \leq 1} \left(\mathbb{E} \left|\frac{\partial^2 f}{\partial x^2}(t, X(t))\right|\right) < \infty.
\end{eqnarray}
Further assume  that $\frac{\partial^2 f}{\partial x^2}$  
is computable on $[0, 1] \times \mathbb{R}$. 
Then  for any complex oscillation $\omega$ and any $0 \leq t \leq 1$, It\^o's formula holds: 
\begin{eqnarray*}
f(t, \omega(t)) = f(0,0) + \int_0^t \frac{\partial f}{\partial t}(s, \omega(s)) ds + \int_0^t \frac{\partial f}{\partial x}(s, \omega(s)) d\omega(s) + \int_0^t {\scriptstyle\frac{1}{2}} \frac{\partial^2 f}{\partial x^2}(s, \omega(s)) ds.
\end{eqnarray*}
 \end{theorem}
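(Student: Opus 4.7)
The plan is to adapt the classical proof of It\^o's formula via second-order Taylor expansion, with each convergence statement lifted to the effective setting using the machinery developed in Section~\ref{mushana} and Theorem~\ref{sde3412wsw}. Fix a complex oscillation $\omega$ and $t \in [0, 1]$, set $t_k^{(n)} = k/2^n$ and $\ell_n = \lfloor 2^n t\rfloor$, and telescope $f(t, \omega(t)) - f(0, 0)$ along the dyadic partition up to $t_{\ell_n}^{(n)}$, with a continuity-controlled tail piece running from $t_{\ell_n}^{(n)}$ to $t$. Applying Taylor in both variables on each subinterval produces four contributions: a Riemann-type sum involving $\partial_t f$, a discrete stochastic sum involving $\partial_x f$, a quadratic-variation sum involving $\tfrac{1}{2}\partial_{xx} f$ weighted by $(\omega(t_{k+1}^{(n)}) - \omega(t_k^{(n)}))^2$, and a higher-order remainder.

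The first piece is a Riemann sum of a continuous function, hence converges pathwise to $\int_0^t \partial_t f(s, \omega(s)) ds$. The second piece is precisely the pre-limit form of the pathwise stochastic integral of Section~\ref{mushana}: one verifies that $(s, \omega) \mapsto \partial_x f(s, \omega(s))$ satisfies conditions \eqref{sdw345ers}, \eqref{eqnkaa2}, and the required computability, so its discrete sum converges uniformly in $t$ to $\int_0^t \partial_x f(s, \omega(s)) d\omega(s)$ by the main theorem of \cite{Mukeru_JOC_2014}. The higher-order remainder is dominated by $\sup_k|\omega(t_{k+1}^{(n)}) - \omega(t_k^{(n)})|$ times the quadratic-variation sum, and the supremum tends to zero along dyadic partitions because complex oscillations obey the effective L\'evy modulus of continuity \cite{fo:1}.

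The main obstacle is the quadratic-variation sum. I would split it into the Riemann-type sum $\sum_k \tfrac{1}{2}\partial_{xx} f(t_k^{(n)}, \omega(t_k^{(n)})) \cdot 2^{-n}$, which converges to $\int_0^t \tfrac{1}{2}\partial_{xx} f(s, \omega(s)) ds$ by continuity, plus an error
\[
E_n(t, \omega) = \sum_{k=0}^{\ell_n - 1} \tfrac{1}{2}\partial_{xx} f(t_k^{(n)}, \omega(t_k^{(n)})) \bigl[ (\omega(t_{k+1}^{(n)}) - \omega(t_k^{(n)}))^2 - 2^{-n}\bigr]
\]
which must vanish uniformly in $t$ for every complex oscillation. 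Using independence of Brownian increments together with \eqref{e234wwaq}, a second-moment estimate (after a standard truncation of $\partial_{xx} f$ to secure $L^2$ control) yields $\mathbb{E}|E_n|^2 = O(2^{-n})$, so Chebyshev's inequality gives summable probability bounds on the exceptional sets $\{|E_n| > 2^{-n/8}\}$. Because $\partial_{xx} f$ is computable, these sets are defined by polynomial conditions on finitely many dyadic values of $\omega$ and hence lie in an algebra effectively generated by half-spaces $\{\alpha_1 \omega(t_1) + \cdots + \alpha_m \omega(t_m) < a\}$ satisfying Definition~\ref{sew12dwe23}. Theorem~\ref{sde3412wsw} then excludes $\omega$ from the constructive null set $\bigcap_n \{|E_n| > 2^{-n/8}\}$. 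Passing from the truncated $\partial_{xx} f$ back to the original is handled by Theorem~\ref{sddfsdsse3ed} combined with \eqref{e234wwaq}. The delicate point is the effective representation of the quadratic-variation error inside the algebra of Definition~\ref{sew12dwe23}; everything else is classical Taylor expansion.
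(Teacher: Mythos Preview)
The paper does not give its own proof of this theorem. It is stated in Section~5.1 with an explicit citation to \cite{Mukeru_JOC_2014} (``we will make use of the following effective version of It\^o's lemma \cite{Mukeru_JOC_2014}'') and is used purely as an imported tool in the proof of Theorem~\ref{djkals23}. There is therefore nothing in this paper to compare your argument against.

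Your sketch does follow the standard route one would expect in the cited reference: dyadic Taylor expansion, identification of the first-order sum with the pathwise stochastic integral, and control of the quadratic-variation error by building a constructive null set via Theorem~\ref{sde3412wsw}. One caution: the pathwise integration machinery \emph{as presented in this paper} (Section~\ref{mushana}) is restricted to integrands of the form $\varphi(\omega(t))$ with no explicit $t$-dependence, whereas $\partial_x f(s,\omega(s))$ may depend on $s$ directly. To invoke it for the general $f(t,x)$ of the statement you would need the broader framework of \cite{Mukeru_JOC_2014}, not the specialisation recorded here. Apart from that, the outline is reasonable, but the actual proof lives in the cited paper, not in this one.
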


\subsection{Proof of Theorem \ref{djkals23}}

 
For the existence of effective local times and Tanaka's formula, we use effective versions of some constructions given in  
\cite[pp 127-142]{Chung_William_83}. 
  For a given rational number $\epsilon >0$, consider the function $f_\epsilon:\mathbb{R} \to \mathbb{R}$ given by:
   \begin{eqnarray*}
  f_\epsilon (x) =  \left\{\begin{array}{ccc}
   0 & \mbox{ if } x\leq -\epsilon \\
   (x+\epsilon)^2/4\epsilon  & \mbox{ if } |x| < \epsilon \\
    x & \mbox{ otherwise}.
    \end{array}
    \right.
  \end{eqnarray*}
 Then 
  \begin{eqnarray*}
  f'_\epsilon (x) =  \left\{\begin{array}{ccc}
   0 & \mbox{ if } x\leq -\epsilon \\
   (x+\epsilon)/2\epsilon  & \mbox{ if } |x| < \epsilon \\
    1 & \mbox{ otherwise}
    \end{array}
    \right.
  \end{eqnarray*}
  and \begin{eqnarray*}
  f''_\epsilon (x) =  \left\{\begin{array}{ccc}
      1/2\epsilon  & \mbox{ if } |x| < \epsilon \\
    0 & \mbox{ if }  |x| > \epsilon.
    \end{array}
    \right.
  \end{eqnarray*}  
  Set $f''(x) = 0$ for $x = \pm \epsilon$. Since $f''$ is not continuous, It\^o's formula is not applicable here. To overcome this, we first approximate $f_\epsilon$ by a sequence of convolution products $$g_n(x) = \phi_n * f_\epsilon(x) = \int_{-\infty}^{\infty} \phi_n(z) f_{\epsilon}(x - z) dz, \forall n\geq 1 $$ where $(\phi_n)$ is a uniformly computable sequence of $C^\infty$-functions with support  in $[-2^{-n}, 2^{-n}]$ and such that $\int_{-\infty}^{+\infty} \phi_n(x) dx = 1$ for all $n$. 
We have that 
 $$g'_n(x) = \int_{-\infty}^{\infty}  f'_{\epsilon}(x - z) \phi_n(z) dz,\,\, \,\, g''_n(x) = \int_{-\infty}^{\infty}  f''_{\epsilon}(x - z) \phi_n(z) dz,\,\,x\in \mathbb{R}.$$ Moreover, the sequences $(g_n)$, $(g'_n)$ and $(g''_n)$ are uniformly computable in $n$. Also It\^o's formula is applicable to each function $g_n$ ($g_n$ satisfies condition  (\ref{e234wwaq}) of Theorem \ref{qaqsaersds}). Then for any complex oscillation $\omega$, 
  \begin{eqnarray*}
g_n\omega(t))- g_n(0) = \int_0^t g'_n(\omega(s)) d\omega(s) + \int_0^t {\scriptstyle\frac{1}{2}} g''_n(\omega(s)) ds.
\end{eqnarray*}
For $n \to \infty$, $g_n(x) \to f_\epsilon(x)$ and $g_n'(x) \to f'_\epsilon(x)$ for any $x$ and $g''_n(x) \to f''_\epsilon(x)$ for $x \ne \pm \epsilon.$
Clearly, for any $\omega \in C[0, 1]$, $$\int_0^1 (g'_n(\omega(s)) - f'_\epsilon(\omega(s)))^2 ds \leq (1/4\epsilon^2)\, 2^{-2n}$$
from which it follows, by Theorem \ref{sddfsdsse3ed}, that  
 for any complex oscillation, uniformly on $[0, 1]$,
  \begin{eqnarray} \label{sds23sds}
\lim_{n\to \infty} \int_0^t g'_n(\omega(s)) d\omega(s)) = \int_0^t f'_\epsilon(\omega(s))d\omega(s).
\end{eqnarray}
We also have that, for any complex oscillation $\omega$, 
   \begin{eqnarray} \label{11wwe3e345e}
\lim_{n\to \infty} \int_0^t g''_n(\omega(s)) ds = \int_0^t f''_\epsilon(\omega(s))ds
\end{eqnarray}
uniformly in $t \in [0, 1]$. 
   Indeed, for $\epsilon >0$ fixed, since $g''_n(\omega(t))\to f''_{\epsilon}(\omega(t))$ as $n\to \infty$ for $\omega(t) \ne \pm \epsilon$, and $|g''_n| \leq 1/2\epsilon$, the result follows from by the bounded convergence theorem and the that the level set $$Z_a = \{t\in [0, 1]: \omega(t) = a\}$$  of any complex oscillation $\omega$ has Lebesgue measure 0 for any computable real number $a$. (It is shown in \cite{fo:5}  that the zero set $Z_0$ of a complex oscillation has dimension $\leq 1/2$ and in particular $\lambda(Z_0) = 0$. The proof extends easily to all computable real numbers $a$.)
   
    Then for any complex oscillation $\omega$ and any $t \leq 1$, (\ref{sds23sds}) and (\ref{11wwe3e345e}) gives 
\begin{eqnarray*}
 \lim_{n \to\infty} \int_0^t g'_n(\omega(s) d\omega(s) + {\scriptstyle\frac{1}{2}} \int_0^t g''_n(\omega(s)) ds  \\
 =\int_0^t f'_\epsilon \omega(s) d\omega(s) + {\scriptstyle\frac{1}{2}} \int_0^t f''_\epsilon(\omega(s)) ds.
 \end{eqnarray*}
Therefore 
   $\lim_{n \to \infty} g_n(x) = f_\epsilon(x)$, it follows using It\^o's formula on $g_n$ that 
     \begin{eqnarray} \label{eqanwew23}
f_\epsilon(\omega(t)) - f_\epsilon(0) = \int_0^t f'_\epsilon (\omega(s)) d\omega(s) + {\scriptstyle\frac{1}{2}} \int_0^t f''_\epsilon(\omega(s)) ds.
\end{eqnarray}
for any complex oscillation $\omega$. 
This is Ito's formula for the function $f_\epsilon$. 
 We can now consider the limit as $\epsilon \to 0$ in (\ref{eqanwew23}).
 
1) First, we have that for $\epsilon \to 0$,  $f_\epsilon(x) \to x^+ =  \max\{x, 0\}$ and then $$f_\epsilon(\omega(t)) \to \omega^+(t) = 1_{[0,+\infty)}(\omega(t)).$$

2) For the first derivative,
    \begin{eqnarray*}
  \lim_{\epsilon \to 0} f'_\epsilon (x) =  \left\{\begin{array}{ccc}
   0 & \mbox{ if } x < 0\\
   1 & \mbox{ if } x > 0 \\
    1/2 &  \mbox{ if } x = 0.
    \end{array}
    \right.
     \end{eqnarray*}
  We now take $\epsilon_n = 2^{-n}$, $ n\geq 1$ and show that 
 \begin{eqnarray} \label{eqawwee34}
  \mathbb{P}\left[\int_0^1 \left(f'_{\epsilon_n} (\omega(t)) - 1_{[0,+\infty)}(\omega(t))\right)^2 dt > \epsilon_n^{1/2}\right] \leq  \frac{\epsilon_n^{1/2}}{3 \sqrt{2\pi}}\,\,\,\forall n \geq 1.
 \end{eqnarray}
This relation together with Theorem \ref{sddfsdsse3ed} yield 
$$\lim_{n \to \infty} \int_0^t f'_{\epsilon_n}(\omega(s))d\omega(s) = \int_0^t 1_{[0, \infty)}(\omega(s))d\omega(s)$$ for any complex oscillation $\omega$ uniformly in $t$. \\
To obtain  (\ref{eqawwee34}),
 \begin{eqnarray*}
 \mathbb{E}\left[\int_0^1 \left((f'_\epsilon - 1_{[0,+\infty)})(X(t)\right)^2 dt \right] & = & \int_0^1 \mathbb{E}\left(f'_\epsilon  - 1_{[0,+\infty)})(X(t))\right)^2 dt 
 \end{eqnarray*}
 and since $X(t)$ is normally distributed with mean 0 and variance $t$,
\begin{eqnarray*}
 \mathbb{E}\left((f'_{\epsilon_n} - 1_{[0,+\infty)})(X(t))\right)^2 & = & \int_{-\infty}^{+\infty}\left(f'_{\epsilon_n}(x)-1_{[0,+\infty)}(x)\right)^2 e^{-x^2/2t}/\sqrt{2 \pi t}dx\\
 &\leq & \frac{1}{(2 {\epsilon_n})^2 \sqrt{2 \pi t}}\left(\int_{-{\epsilon_n}}^0 (x+{\epsilon_n})^2 dx + \int_0^{\epsilon_n} (x-\epsilon_n)^2 dx\right)\\
&= &\epsilon_n/(6\sqrt{2 \pi t}).
 \end{eqnarray*}
Therefore
  $$\mathbb{E}\left[\int_0^1 \left((f'_{\epsilon_n} - 1_{[0,+\infty)})(X(t))\right)^2 dt \right] \leq \epsilon_n/(3 \sqrt{2\pi}).$$
This implies (\ref{eqawwee34}) by  Chebyshev's inequality. 

3)  Since 
 \begin{eqnarray*}
  f''_\epsilon (\omega(s)) =  \left\{\begin{array}{ccc}
      1/2\epsilon  & \mbox{ if } |\omega(s)| < \epsilon \\
    0 & \mbox{ if }  |\omega(s)| > \epsilon
    \end{array}
    \right.
  \end{eqnarray*} 
and $\lambda\{t\leq 1: \omega(t) = \epsilon\} = 0$ for any complex oscillation $\omega$, it follows that 
  $$\int_0^t f''_\epsilon (\omega(s)) ds = \frac{1}{2\epsilon} \int_0^t 1_{\{s: |\omega(s)| < \epsilon\}}(s) ds.$$
 
Therefore, taking $\epsilon = \epsilon_n = 2^{-n}$, relation (\ref{eqanwew23}) yields,
 \begin{eqnarray} \label{31q242wsd}
L(t,\omega) & = & \lim_{n \to +\infty} \frac{1}{2\epsilon_n} \int_0^t 1_{\{s: |\omega(s)| < \epsilon_n\}}(s)\,ds\nonumber\\
& = & 2\left[\omega^+(t) - \int_0^t 1_{[0, \infty)}(\omega(s))d\omega(s)\right]
 \end{eqnarray}
 for any complex oscillation $\omega$ uniformly in $t\in [0, 1]$. 
 The existence of the limit follows from the existence of the pathwise stochastic integral of $\int_0^t 1_{[0, \infty)}(\omega(s))d\omega(s)$. 
 
If we replace function $f_\epsilon$ by 
 \begin{eqnarray*}
  h_\epsilon (x) =  \left\{\begin{array}{ccc}
   -x & \mbox{ if } x\leq -\epsilon \\
   (-x+\epsilon)^2/4\epsilon  & \mbox{ if } |x| < \epsilon \\
    0 & \mbox{ otherwise}.
    \end{array}
    \right.
  \end{eqnarray*}
we obtain that, by the same calculations,  
 \begin{eqnarray}\label{sdw1q242wsd}
L(t,\omega) = 2\left[\omega^-(t) + \int_0^t 1_{(-\infty, 0]}(\omega(s))d\omega(s)\right]
 \end{eqnarray}
where $x^- = \max\{0, -x\}.$ Adding (\ref{31q242wsd}) and (\ref{sdw1q242wsd}) yields Tanaka's formula 
   \begin{eqnarray}\label{sdwssa1q2942wsd}
L(t,\omega) =|\omega(t)| - \int_0^t \mbox{sign}(\omega(s))\,d\omega(s)
 \end{eqnarray}
 for every complex oscillation $\omega$. Note that for any complex oscillation,
    $$\int_0^t 1_{[0, \infty)}(\omega(s))d\omega(s) = \int_0^t 1_{(0, \infty)}(\omega(s))d\omega(s)$$ since
    $\omega(k/2^n) \ne 0$ for all integers $n, k \ne 0$. 
This concludes the proof. \hfill \qed

 

\section{Discrete  approximations of local times}  \label{e132ssw23}

In this section, we obtain a very simple approximation of local times of a  complex oscillation $\omega$ depending only on a finite number of values of $\omega$ at dyadic points (Theorem \ref{enyanya}). This is probably the simplest representation of local times known to the authors. 
\paragraph{Proof of Theorem \ref{enyanya}} 

Consider the Haar system in $L^2[0, 1]$: 
 $$e_0 = 1, \, e_1 = \chi([0, 1/2)) - \chi([1/2, 1)$$
and 
  $$e_{jn} = 2^{j/2}\left(\chi[n 2^{-j}, n 2^{-j} + 2^{-(j+1)}) - \chi[n 2^{-j} + 2^{-(j+1)}, (n+1) 2^{-j})\right)$$
 where $n , j$ are integers such that $j \geq 1$ and $0\leq n < 2^j$. Here $\chi(I)$ is the indicator function of interval $I$. Let $\Delta_0(t), \Delta_1(t), \Delta_{jn}(t)$ be the Schauder functions on $[0, 1]$ obtained by integrating $e_0, e_1, e_{jn}$ from $0$ to $t$: 
   $$\Delta_h(t) = \int_0^t e_h(s) ds,\,\,\,h = 0, 1, jn: n < 2^j \mbox{ in } \mathbb{N}. $$
 Any path $\omega \in C[0, 1]$ can be approximated by a finite linear combination of Schauder functions on $[0, 1]$ as follows:
\begin{eqnarray*}
 \omega_m(t) =  \xi_0 \Delta_0(t) + \xi_1 \Delta_1(t) + \sum_{1 \leq j \leq m} \sum_{n <2^j} \xi_{jn}\Delta_{jn}(t) 
 \end{eqnarray*}
where the coefficients $\xi_0, \xi_1, \xi_{jn}$ are given by 
 $$ \xi_0 = \omega(1),\,\, \xi_1 = 2\,\omega(1/2) - \omega(1), \,\, \xi_{jn} = 2^{j/2} (2\, \omega(t_{jn}) - \omega(t_{jn} + \delta_j) - \omega(t_{jn} - \delta_j)),\, $$ where 
 $$t_{jn} = (2n + 1)2^{-(j+1)},\,\, \delta_j = 2^{-(j +1)}.$$ 
 It is shown in \cite{fo:2} that for any complex oscillation $\omega$,  there exists $m_0 \in \mathbb{N}$ (depending on $\omega$) such that for all $m < m_0$, 
 \begin{eqnarray} \label{q1sdsds234}
\|\omega -\omega_m\| \leq C \frac{\sqrt{m}}{2^{m/2}}.
\end{eqnarray}
 Clearly, 
 $$\omega_m(t) = \omega(t) \mbox{ for } t = k 2^{-m}, \,\,k = 1, 2, \ldots, 2^m,$$
(that is, $\omega_m$ is an exact approximation of $\omega$ at dyadic rationals $k 2^{-m}$, for  $k = 1, 2, \ldots, 2^m$).  
 We can now prove Theorem \ref{enyanya}. 

Let $f:[0, 1] \times C[0, ]$:  $(t,\omega) \mapsto f(t,\omega) = \sign(\omega(t))$. Consider the sequence $f_n: [0,1] \times C[0, 1]$, $n=1,2, \ldots,$ defined by
 $$f_n(t,\omega) = \sign(\omega((k-1) 2^{-n})),\,\, \mbox{ for }k 2^{-n} \leq t < k 2^{-n},\, k=1,2, \ldots,2^n.$$
It is proven in \cite{Mukeru_JOC_2014}(proof of Theorem 3) that 
       \begin{eqnarray*}
\mathbb{P}\left[\omega: \int_0^1 \left(f_n(s,\omega)-f_{n-1}(s,\omega)\right)^2 ds > 2^{-n/4}\right] \leq 8\times 2^{-n/4}.
 \end{eqnarray*}
which, as discussed in section \ref{mushana} (see relation (\ref{mushana1})), 
implies that for any complex oscillation $\omega$, there exists $n_0 \in \mathbb{N}$ such that for all $M, m \geq n_0$, 
\begin{eqnarray}\label{musii1223}
\sup_{0\leq t\leq 1}\left|\int_0^t f_M(s,\omega) d\omega(s) - \int_0^t f_m(s,\omega) d\omega(s) \right|
\leq K 2^{-m/2} m. 
\end{eqnarray}
In particular, 
 \begin{eqnarray}\label{musii1223}
\sup_{0\leq t\leq 1}\left|\int_0^t f(s,\omega) d\omega(s) - \int_0^t f_m(s,\omega) d\omega(s) \right|
\leq K 2^{-m/2} m, \forall m\geq n_0. 
\end{eqnarray}
By definition,
 \begin{eqnarray} \label{32342ere1}
\int_0^t f_m(s, \omega) d\omega(s) & = & \sum_{k=1}^{\ell} \mbox{sign}(\omega((k-1)/2^m))\left[\omega(k/2^m) - \omega((k-1)/2^m)\right]\nonumber\\
&&+ \mbox{ sign}(\omega(\ell/2^m))\left[\omega(t) - \omega(\ell/2^m)\right],\,\,\,\, \ell = \lfloor t 2^m \rfloor. \end{eqnarray} 
For each $0 \leq t \leq 1$, define $\int_0^t f_m(s, \omega_m) d\omega_m(s)$ by
\begin{eqnarray} \label{sdsw3ed2q}
\int_0^t f_m(s, \omega_m) d\omega_m(s) & = & \sum_{k=1}^{\ell} \mbox{sign}(\omega_m((k-1)/2^m))\left[\omega_m(k/2^m) - \omega_m((k-1)/2^m)\right]\nonumber\\
&&+ \mbox{ sign}(\omega_m(\ell/2^m))\left[\omega_m(t) - \omega_m(\ell/2^m)\right],\,\,\,\, \ell = \lfloor t 2^m \rfloor. \end{eqnarray} 
Since $\omega_m(k/2^m) = \omega(k/2^m)$ for all $k=1,2, \ldots, 2^m$, then (\ref{32342ere1}) and (\ref{sdsw3ed2q}) yield 
\begin{eqnarray} \label{ssd23bnh}
\left|\int_0^t f_m(s, \omega) d\omega(s) - \int_0^t f_m(s, \omega_m) d\omega_m(s)\right| &=&|\omega(t) - \omega_m(t)| \leq  C\sqrt{m}/2^{m/2}
\end{eqnarray}
for all large numbers $m > m_0$ (by  (\ref{q1sdsds234})).  Then relations (\ref{musii1223}) and (\ref{ssd23bnh}) imply that for all $m\geq \max\{m_0, n_0\}$, 
$$\sup_{0\leq t\leq 1}\left|\int_0^t f(s,\omega) d\omega(s) - \int_0^t f_m(s,\omega_m) d\omega_m(s) \right|
\leq K 2^{-m/2} m + C\sqrt{m}/2^{m/2}.$$
Then Tanaka's formula (\ref{s2qwe2t})
yields
   \begin{eqnarray*}
\sup_{0\leq t\leq 1}\left|L(t, \omega) - \left(|\omega_m(t)|- \int_0^t f_m(s, \omega_m) d\omega_m(s)\right) \right| &\leq &  K 2^{-m/2} m \\
&&+ 2 C\sqrt{m}/2^{m/2}
\end{eqnarray*} 
since $|\omega(t) - \omega_m(t)|\leq C\sqrt{m}/2^{m/2}$.
It follows that uniformly in $t$, 
 \begin{eqnarray} \label{qwwe2w2w}
L(t,\omega) & = & \lim_{m\to \infty}\left(|\omega_m(t)| - \int_0^t f_m(s, \omega_m) d\omega_m(s)\right)\nonumber\\
  & = & |\omega(t)| - \lim_{m\to \infty}\int_0^t f_m(s, \omega_m) d\omega_m(s).
\end{eqnarray}
Denote for simplification purpose
  $$x_k = \omega(k/2^m), k = 1, 2, \ldots, 2^m.$$
By the continuity of $\omega$,  $$\omega(t) = \lim_{m\to \infty} \omega(\ell/2^m),\,\,\ell = \lfloor t 2^m \rfloor$$
Clearly, from (\ref{qwwe2w2w}), 
\begin{eqnarray*}
L(t,\omega,a) & = & \lim_{m\to \infty} \left(|\omega(\ell/2^m)-a| -|a|- \sum_{k=1}^\ell (\mbox{sign}(x_{k-1}-a)) (x_k - x_{k-1})\right)\\
 & = & \lim_{m\to \infty} \left(|x_\ell-a|-|a| - \sum_{k=1}^\ell (\mbox{sign}(x_{k-1}-a)) (x_k - x_{k-1})\right).\\
\end{eqnarray*}
For any sequence of real numbers $t_0, t_1, \ldots, t_n$, 
$$|t_n| - \sum_{k=1}^n \mbox{sign}(t_{k-1}) (t_k - t_{k-1}) = t_0 + \sum_{k\in S} |2 t_k|$$ where 
$$S = \{k\in\{1, 2, \ldots, n\}: \mbox{sign}(t_k) \ne \mbox{sign}(t_{k-1})\}.$$
It follows that 
\begin{eqnarray*}
L(t,\omega) & =& \lim_{m\to \infty} 2 \sum_{k\in S_\ell} |\omega(k 2^{-m})| 
\end{eqnarray*}
where $$S_\ell= \{k\in\{1, 2, \ldots, \ell\}: \mbox{sign}(\omega(k/2^{m})) \ne \mbox{sign}(\omega((k-1)/2^{m}))\}. \hfill \qed$$ 



\end{document}